\newtheorem{theorem}{Theorem}
\newtheorem{lemma}{Lemma}
\theoremstyle{definition}
\newtheorem{definition}{Definition}
\theoremstyle{remark}
\newtheorem{remark}{Remark}
\def \R{\mathbb R}
\newcommand{\sset}[1]{\left\{ #1\right\}}
\newcommand{\ssets}[1]{\{ #1\}}
\newcommand{\fwh}[1]{\; \left| \; #1 \right.}
\newcommand{\card}[1]{\left| #1 \right|}
\newcommand{\union}{\cup}
\newcommand{\map}{\longrightarrow}
\newcommand{\inters}{\cap}   
\newcommand{\then}{\Longrightarrow}
\newcommand{\vecc}[1]{\ensuremath{\bm{#1}}}
\newcommand{\opt}{\ensuremath{\mathrm{OPT}}}
\title{A Unifying Approximate Potential for\\ Weighted Congestion
Games\thanks{Part of this work was done while the authors were members of the
Operations Research group at Technical University of Munich, School of
Management, supported by the Alexander von Humboldt Foundation with funds from
the German Federal Ministry of Education and Research (BMBF). D.\ Poças was also
supported by FCT via LASIGE Research Unit, ref.\ UIDB/00408/2020 and ref.\ UIDP/00408/2020. \newline\indent
A preliminary version of this paper appeared in SAGT'20 \cite{gp2020_sagt}.}}
\author{Yiannis Giannakopoulos\thanks{Friedrich-Alexander-Universität Erlangen-Nürnberg, Department of Data Science  ({\tt
\href{mailto:yiannis.giannakopoulos@fau.de}{\nolinkurl{yiannis.giannakopoulos@fau.de}})
}}
	\and
		Diogo Poças\thanks{LASIGE, Faculdade de Ciências, Universidade de Lisboa
		({\tt \href{mailto:dmpocas@fc.ul.pt}{\nolinkurl{dmpocas@fc.ul.pt}})}}
}
\date{March 15, 2022}
\begin{document}
\maketitle
\begin{abstract}
We provide a unifying, black-box tool for establishing existence of approximate
equilibria in weighted congestion games and, at the same time, bounding their Price of
Stability. Our framework can handle resources with general costs---including, in particular, decreasing ones---and is formulated in terms of a
set of parameters which are determined via elementary analytic properties of the
cost functions.

We demonstrate the power of our tool by applying it to recover the recent result of
Caragiannis and Fanelli [ICALP'19] for polynomial congestion games; improve upon the
bounds for fair cost sharing games by Chen and Roughgarden [Theory Comput.\ Syst.,
2009]; and derive new bounds for nondecreasing concave costs. An interesting feature
of our framework is that it can be readily applied to \emph{mixtures} of different
families of cost functions; for example, we provide bounds for games whose resources
are conical combinations of polynomial and concave costs.

In the core of our analysis lies the use of a unifying approximate potential
function which is simple and general enough to be applicable to arbitrary congestion
games, but at the same time powerful enough to produce state-of-the-art bounds
across a range of different cost functions.
\end{abstract}

\section{Introduction}
\label{sec:intro}

Atomic congestion games are one of the most well-studied topics in \emph{algorithmic
game theory}~\cite{Roughgarden2016,2007a}. In their most general form, players have
weights and compete over a common set of resources; the cost of each resource is a
function of the total weight of the players that end up using it. As a result, they
can model a wide range of interesting applications including, e.g., network
routing~\cite{Roughgarden2007} and load balancing~\cite{Vocking2007a}, but also even
cost-sharing games (via the use of decreasing cost functions) like fair network
design~\cite{Tardos:2007aa}.

An important special case is that of \emph{unweighted} congestion games, where the
costs depend only on the \emph{number} of players that use each edge. In a seminal
paper, Rosenthal~\cite{Rosenthal1973a} proved that unweighted congestion games
always have (pure Nash) equilibria. A key tool in his derivation was the novel use
of a \emph{potential function}, which is able to capture the different players'
deviations in a very elegant and concise way. Then, the desired equilibrium is derived
as the minimizer of that function (over all feasible outcomes of the game).
This technique can also be viewed as an \emph{equilibrium refinement}, and
has been a very influential idea in game theory~\cite{Monderer1996a}. It allows us
not only to establish the existence of equilibria, but in many cases, this
special potential-minimizer equilibrium has additional desired properties.

Of particular importance to us in this paper, is that it has been the de facto
method for proving \emph{Price of Stability (PoS)} bounds in congestion games (see,
e.g.,~\cite[Ch.~18,~19]{2007a}). The PoS notion~\cite{Anshelevich2008a,Correa2004}
captures the minimum approximation ratio of the social cost, among all equilibria,
to the socially optimum outcome of the game (that might not be an equilibrium). In
other words, the PoS is the best-case counterpart of the notorious Price of Anarchy
(PoA) notion introduced by Koutsoupias and
Papadimitriou~\cite{Koutsoupias1999b,Papadimitriou2001a}

Unfortunately, though, it is a well-known fact that general \emph{weighted}
congestion games do \emph{not} always have equilibria and thus, do not admit a
potential function.
To alleviate this, a line of work has focused on designing \emph{approximate}
potential functions (see,
e.g.,~\cite{Chen2008,Hansknecht2014,Christodoulou2011a,cggs2018,Caragiannis:2019aa}):
the minimizer of such functions is guaranteed to be an approximate equilibrium (as
opposed to an \emph{exact} one that is given by Rosenthal's potential in the
unweighted case), while at the same time it can achieve a good approximation ratio
to the optimal social cost (providing, thus, an upper bound for the
approximate-equilibrium extension of the PoS notion). However, most of those prior
works use different approximate potentials, designed specially for the particular
cost-function model that each one studies.

Our goal in this paper is to provide a simple, high-level framework whose interface is
agnostic to the underlying potential function technicalities and which can readily
be instantiated for all resource costs at hand to derive meaningful bounds.

\subsection{Related Work}
\label{sec:related-work}

Following the seminal work of~\cite{Rosenthal1973a}, a long line of
results has been devoted to the (non)ex\-is\-tence of equilibria in weighted
congestion games. \cite{Goemans2005,Libman2001,Fotakis2005a} demonstrated that
equilibria might not exist even in very simple classes of games, including network
congestion games with quadratic cost functions and games where player weights
are either $1$ or $2$.
On the other hand, \cite{Fotakis2005a,Panagopoulou2007,Harks2012a} showed that
equilibria do exist in games with affine or exponential cost functions;
\cite{Fotakis2009a,Harks2012} proved the same for singleton games (where players can
only occupy single resources). Dunkel and Schulz~\cite{Dunkel2008} were able to
extend the nonexistence instance of Fotakis et al.~\cite{Fotakis2005a} to a hardness
gadget, in order to show that, deciding whether a congestion game with step cost
functions has an equilibrium, is a (strongly) NP-complete problem.

Regarding the existence of \emph{approximate} equilibria in general weighted
congestion games, \cite{cggpw2020} showed that games with $n$ players always have
$n$-approximate equilibria, and this guarantee is tight (up to logarithmic factors);
they also proved that the corresponding decision problem, i.e., of the existence of
$\tilde\varTheta(n)$-approximate equilibria, is NP-complete.

A lot of work has been focused on the important special case of \emph{polynomial}
congestions games, parameterized by the maximum degree $d$ of the cost functions.
Although, due to~\cite{Fotakis2005a} we already know that exact equilibria do not in
general exist in such games, Caragiannis et al.~\cite{Caragiannis2011} were the
first to show that $\alpha$-approximate equilibria do exist for $\alpha=d!$; this
factor was later improved to $\alpha=d+1$~\cite{Hansknecht2014,cggs2018} and
$\alpha=d$~\cite{Caragiannis:2019aa}.
As a matter of fact, Caragiannis and Fanelli~\cite{Caragiannis:2019aa} provide an
even more comprehensive result that, for any choice of a parameter $\delta\in[0,1]$,
simultaneously establishes the existence of $(d+\delta)$-approximate equilibria and
gives an upper bound of $\frac{d+1}{\delta+1}$ on their PoS. They achieve this by
designing an appropriate approximate potential function, tailored to polynomial
costs.
On the nonexistence front, \cite{Hansknecht2014} first gave
instances of very simple, two-player polynomial congestion games that do not have
$\alpha$-approximate equilibria, for $\alpha\approx 1.153$. This was recently
improved to $\alpha=\varOmega\left(\frac{\sqrt{d}}{\log d}\right)$ by Christodoulou et al.~\cite
{cggpw2020}, who also established NP-hardness of the corresponding existence
decision problem.

The work of Hansknecht et al.~\cite{Hansknecht2014} is very relevant for our
approach in this paper, since they also propose a ``generic'' approximate potential
function that can, in principle, be applied to general cost functions. They
instantiate it for polynomial costs to derive their aforementioned
existence of $(d+1)$-approximate equilibria. Additionally, they also state a result
about the existence of $\frac{3}{2}$-approximate equilibria in games with
nondecreasing concave costs; however, this proof in their paper is not complete.
Furthermore, \cite{Hansknecht2014} focuses just on the existence of approximate
equilibria, and thus it does not provide any PoS bounds.

Another well-studied class of congestion games is that of fair cost sharing, where
each resource has a constant initial cost which is split equally among the players
that use it. Thus, such games have \emph{decreasing} cost functions.
Finding the PoS for the special, undirected network version of such games is a
notorious open problem in the field (see, e.g.,
\cite{Anshelevich2008a,Fiat2006,Bilo2013,Bilo2014,Freeman2016}).
Very relevant for us is the work of Chen and Roughgarden~\cite{Chen2008} who showed
that general weighted fair cost sharing games always have $\alpha$-approximate equilibria
whose PoS is at most $O\left(\frac{\log W}{\alpha}\right)$, for any choice of parameter
$\alpha=\varOmega(\log w_{\max})$, where $w_{\max}$ is the maximum weight of any player
and $W$ the maximum possible load in any resource. They achieve this by designing a
special approximate potential function, tailored to the specific form of the cost functions.

\subsection{Our Results and Techniques}
\label{sec:results}

We propose a new approximate potential function
(see~\eqref{eq:potential-general-form}) for weighted congestion games with general
cost functions. In particular, our potential can be instantiated beyond the standard
model of polynomial cost functions and the common assumption of non-decreasing
monotonicity.
However, this potential is only used in the analysis part of our paper: we hide away
its specific form by hard-coding it within the proof of a unifying tool
(\cref{th:good_existence_easy}). Then, this tool can be used in a black-box way to
readily derive both existence of approximate equilibria, and bounds on their PoS.
This proof makes also use of a general, high-level lemma that can capture the
essence of the \emph{potential method} as a technique for deriving existence and PoS
bounds for approximate equilibria (\cref{lemma:potential-method}); we believe this
might be of independent interest, since in future work it could be used for
alternative potential functions, beyond our choice
of~\eqref{eq:potential-general-form} in this paper.

Our framework effectively works in two steps. Given a congestion game, first
one has to determine how \emph{good} its cost functions are with respect to two
simple, analytic properties (\cref{def:good-costs}). Then, the resulting
``goodness'' parameters can be plugged straight into our master theorem
(\cref{th:good_existence_easy}) to deduce the existence of an
$(\alpha,\beta)$-equilibrium; that is, an $\alpha$-approximate (pure Nash)
equilibrium whose social cost is at most a factor of $\beta$ away from the optimum.

We demonstrate the power of our tool by applying it to recover and improve prior
bounds on the existence of $(\alpha,\beta)$-equilibria for well-studied classes of
congestion games, as well as to derive novel results. The simplicity and the
algebraic nature of our tool allows us to produce fine-grained bounds in the form of
a parametric trade-off curve that describes the relation between the $\alpha$ and
$\beta$ parameters of the $(\alpha,\beta)$-equilibrium; in other words, all our results give a
\emph{continuum} of existence bounds.
Our bounds are summarized in~\cref{table:results}.
\begin{table}[t]
\centering
\footnotesize
\begin{tabular}{llll}
\toprule
& & \multicolumn{2}{c}{Our Results} \\
\cmidrule(lr{.75em}){3-4}
Cost functions & Previous Work & General & Extreme Points\\
\midrule
Polynomials of degree $\leq d$ & $\left(\lambda,\frac{d+1}{\lambda}\right)$, \hfill for $\lambda\in[d,d+1]$ & $\left(\lambda,\frac{d+1}{\lambda}\right)$, \hfill for $\lambda\in[d,d+1]$ & $(d,1+\frac{1}{d})$, $(d+1,1)$\\
& \ \hfill \cite{Caragiannis:2019aa}  & \ \hfill  [\cref{th:games-poly}] &  \\
Concave & $\left(\frac{3}{2},\infty \right)$ \hfill \cite{Hansknecht2014} & $\left(\lambda,\frac{\lambda}{\lambda-1}\right)$, \hfill for $\lambda\in\left[\frac{3}{2},2\right]$ & $\left(\frac{3}{2},3\right)$, $(2,2)$\\
&  & \ \hfill  [\cref{th:games-concave}] & \\
Polynomials + Concave & N/A & $\left(\lambda,1+\frac{d+1}{\lambda}\right)$, \hfill for $\lambda\in[d,d+1]$ & $(d,2+\frac{1}{d})$, $(d+1,2)$ \\
&  &  \ \hfill [\cref{th:games-concave-poly}] & \\
Fair cost sharing & $\left(\lambda,1+\frac{2\log_2(1+W)}{\lambda}\right),$ & $\left(\varTheta(\ln w_{\max})+\lambda,1+\frac{\ln W}{\lambda}\right)$, & $\left(\varTheta(\ln w_{\max}),1+\ln W\right)$,\\
& for $\lambda= \varOmega(\ln w_{\max})$ \hfill \cite{Chen2008} & for $\lambda\geq 1$ \hfill [\cref{thm:fair-cost-pne}] & $(\varTheta(\ln W),\varTheta(1))$\\
\bottomrule
\end{tabular}
\caption{Our main results on the existence of $(\alpha,\beta)$-equilibria for different
cost models. For polynomials of degree $d$ we recover the result of
\cite{Caragiannis:2019aa}. For fair costs our results improve those of
\cite{Chen2008} and for concave costs we extend those of~\cite{Hansknecht2014}. For
mixtures of different cost functions, namely polynomial and concave, our results are
novel.}
\label{table:results}
\end{table}

More specifically, first (\cref{th:games-poly}) we rederive the recent bounds
of~\cite{Caragiannis:2019aa} for polynomial congestion games, in a more ``clean'',
high-level way.
Then (\cref{thm:fair-cost-pne}), we improve the $\alpha,\beta$ parameters on the
$(\alpha,\beta)$-equilibrium existence results of~\cite{Chen2008} for fair
cost-sharing games (a more detailed comparison can be seen in~\cref{fig:fair-cost}).
Furthermore, we derive new results for (nondecreasing) concave costs: we show that
$(\lambda,\frac{\lambda}{\lambda-1})$-equilibria always exist, for all
$\lambda\in[\frac{3}{2},2]$ (\cref{th:games-concave}). The special corner case of a
$(\frac{3}{2},3)$-equilibrium is compatible, thus, with the
$\frac{3}{2}$-approximate equilibrium existence stated in~\cite{Hansknecht2014}.

Another interesting characteristic of our tool is its \emph{modularity}: it
can readily combine different cost functions to give bounds for more complex
congestion games (see~\cref{def:good-games}). For example, we prove that games with
cost functions that are conical combinations of $d$-degree polynomials and concave
costs, always have $\left(\lambda,1+\frac{d+1}{\lambda}\right)$-equilibria, where $\lambda$
ranges in $[d,d+1]$ (\cref{th:games-concave-poly}).

Finally, an added advantage of our black-box method is that it also
results in arguably simpler and more streamlined proofs for the existence and PoS bounds.

Before concluding the overview of our results, we want to elaborate a bit more on
the comparison to the potential approach of Hansknecht et al.~\cite{Hansknecht2014}.
Although~\cite{Hansknecht2014} does not deal with PoS bounds, as far as existence of
approximate equilibria is concerned, their paper is rather similar in principle to
ours. They propose a general potential function which is based on a discrete
interpretation of the cost function's integral, which corresponds to the first
component of our potential in~\eqref{eq:potential-general-form}. We take a different
approach by using directly the \emph{actual} integral, and also adding an extra term
that corresponds to a weighted average of the costs of the players' weights. In that
way, we avoid a lot of the intricate technicalities that are involved with the
discrete arguments (e.g., orderings of the weights) in~\cite{Hansknecht2014}, making
the application of our potential (via our high-level tool
of~\cref{th:good_existence_easy}) more ``tractable'' for a wider range of cost
functions.

\section{Model and Notation}
\label{sec:model}
We use $\R_+$ to denote the set of nonnegative real numbers.

In a \emph{(weighted) congestion game} $\mathcal G$
there are finite, nonempty sets of \emph{players} $N$ and \emph{resources} $E$.
Let $n=\card{N}$.
Each player $i\in N$ has a \emph{weight} $w_i\in\R_{+}$ and a \emph{strategy set}
$S_i\subseteq 2^E$. We use $w_{\min}=\min_{i\in N}  w_i$ and $w_{\max}=\max_{i\in N} w_i$ for the minimum and
maximum player weights, respectively, and for a subset of players $I\subseteq N$, we use $w_I=\sum_{i\in
I} w_i$ to denote the sum of their weights. For the special case of $w_{\min}=w_{\max}=1$, that is, if all weights are $1$, we say that $\mathcal G$ is \emph{unweighted}.

Associated with each resource $e\in E$ is a \emph{cost function} $c_e: \R_{+}\map
\R_{+}$. In general, we will make no extra assumptions on the cost functions.
However, important special cases, that we will also study as applications of the
main tool of our paper, include \emph{polynomial congestion games} of degree $d$,
for $d\geq 1$ integer, and \emph{fair cost sharing games}. In the former, the cost
functions are polynomials with nonnegative coefficients and degree at most $d$; in
the latter, cost functions are (decreasing) of the form $c_e(x)=\frac{a_e}{x}$ where
$a_e$ is a positive real.

A (pure) \emph{strategy profile} (or \emph{outcome}) is a choice of strategies
$\vecc{s}= (s_1, s_2,..., s_n)\in \vecc{S}={S}_1\times \cdots \times {S}_n$. We use
the standard game-theoretic notation $\vecc{s}_{-i}=(s_1,\ldots, s_{i-1},\allowbreak
s_{i+1},\allowbreak \ldots s_n)$, $\vecc{S}_{-i}={S}_1\times \cdots \times S_{i-1}
\times S_{i+1} \times \cdots \times S_n$. In that way, for example, we can denote
$\vecc{s}=(s_i,\vecc{s}_{-i})$. Given a profile $\vecc{s}\in\vecc{S}$, we define the
\emph{load} $x_e(\vecc{s})$ of resource $e$ as the total weight of players that use
resource $e$ at outcome $\vecc s$, i.e., $x_e(\vecc{s})=w_{N_e(\vecc s)}= \sum_{i\in
N: e\in s_i} w_i$, where $N_e(\vecc s)$ is the set of players using $e$.
We will use $W=\sum_{i\in N} w_i$ to denote the maximum possible load of any
resource.
The \emph{cost} of player $i$ is
defined by $C_i(\vecc{s})=\sum_{e\in s_i} c_e(x_e(\vecc{s}))$.
The \emph{social cost} of a strategy profile $\vecc{s}$ is the weighted sum of the players' costs
$$C(\vecc{s})=\sum_{i\in N}w_i \cdot C_i(\vecc{s}) = \sum_{e\in E}
x_e(\vecc{s}) \cdot c_e(x_e(\vecc{s})).$$
We use $\opt(\mathcal G)=\min_{\vecc{s}\in S} C(\vecc{s})$ to denote the \emph{optimum social
cost} over all outcomes.

An outcome $\vecc s$ is an \emph{$\alpha$-approximate} (pure Nash)
\emph{equilibrium}, for $\alpha\geq 1$, if 
\begin{equation}
\label{eq:pne-def}
C_i(\vecc{s})\leq \alpha \cdot C_i(s'_i,\vecc{s}_{-i})
\qquad\text{for all}\;\; i\in N,\; s'_i\in S_i
\end{equation}
That is, no player can unilaterally deviate from $\vecc s$ and improve her cost by
more than a factor of $\alpha$. Notice that for the special case of $\alpha=1$ we get the definition of the standard, \emph{exact} pure Nash equilibrium.
We denote the set of all $\alpha$-equilibria of $\mathcal G$ by $\mathrm{NE}_\alpha(\mathcal G)$
Then, the \emph{$\alpha$-approximate Price of Stability ($\alpha$-PoS)} of $\mathcal G$ is the social cost of the best-case
Nash equilibrium over the optimum social cost:
\begin{equation}
\label{eq:pos-def}
 \mathrm{PoS}_\alpha(\mathcal G) =
\min_{\vecc{s}\in\mathrm{NE}_\alpha(\mathcal G)}\frac{C(\vecc{s})}{\opt(\mathcal G)}.
\end{equation}
For $\alpha=1$ we get the standard definition of the Price of Stability (PoS) for exact equilibria~\cite{Anshelevich2008a}.
We combine the notions of an approximate equilibrium with approximating the optimum social cost in the following definition:
\begin{definition}[$(\alpha,\beta)$-equilibrium]
Fix a congestion game $\mathcal G$. A strategy profile $\vecc{s}$ is an $(\alpha,\beta)$-equilibrium if it is an $\alpha$-approximate equilibrium of $\mathcal G$ (see~\eqref{eq:pne-def}) and its social cost is at most $\beta$ times the optimal cost of $\mathcal G$, i.e., $C(\vecc s) \leq \beta \cdot \opt(\mathcal G)$.
\end{definition}
Notice that if a game has an $(\alpha,\beta)$-equilibrium then, due to~\eqref{eq:pos-def}, its $\alpha$-PoS is at most $\beta$. 

\subsection{Equivalent Cost Functions} 
\label{sec:equivalent-costs}
It is not difficult to see that, in any
weighted congestion game, the cost functions of each resource are actually evaluated
on finitely many points: although our model assumes $c_e$ to be
defined over the entire $\R_{+}$, its values outside the domain $\ssets{x_e(\vecc
s)\fwh{\vecc s\in \vecc S}}$ are irrelevant. In particular, this domain is included
within the set of different sums of weights
$$\mathcal{W}=\sset{\left.\sum_{i\in N}y_i\cdot w_i\;\right| \; y_i\in\ssets{0,1},\;
i\in N}.$$ This means that one only needs to define costs on
at most $\card{\mathcal{W}}\leq 2^n$ different values: any two games whose costs
coincide on $\mathcal W$ are equivalent.

However, it is still convenient to treat our costs as functions over $\R_{+}$. First, because this allows for simple and succinct representations. But of
particular importance to us, is also the fact that our main tool
(\cref{th:good_existence_easy}) can be applied to all \emph{integrable} cost
functions (so that~\cref{def:good-costs} can be utilized). From the above discussion,
it should be obvious that any congestion game has (infinitely) many equivalent
representations, that is, different extensions from $\mathcal W$
to $\R_{+}$. Such an extension can always be done in a way that $c_e$ is an integrable
function (since $\mathcal W$ is finite). 

It is interesting to point out here that different representations can potentially
give different existence and PoS bounds via our tool. Although we do not deal with
this feature for most of the paper, it is important for our fair cost sharing results
(\cref{sec:fair-costs-simple}); since function $x\mapsto 1/x$ is not integrable over
the interval $[0,w_{\min})$ (and as a matter of fact, not even defined on $x=0$) we
have the freedom, according to the discussion above, to redefine it in any way we
want on $[0,w_{\min})$, so that it is a well-defined, integrable function over
$\R_{+}$.

\section{The Main Tool}
\label{sec:basic-tool}
In this section we present our framework for establishing existence of
$(\alpha,\beta)$-equilibria in weighted congestion games with general cost
functions.
We begin with the following lemma, that tries to distil and abstract the potential
method technique in congestion games. Specialized or restricted forms of it have
essentially been used, even if not explicitly stated, in multiple works in the past
(see, e.g.,~\cite{Chen2008,Hansknecht2014,Caragiannis:2019aa}). It can be seen as a
more fine-grained version of~\cite[Lemma~4.1]{cggs2018}, although some extra care is
needed to adapt it to the more abstract setting of our paper and utilize its full
power. 

\begin{lemma}[Potential Method]
\label{lemma:potential-method}
Fix a congestion game. Assume that, for each resource $e$, there exist positive reals $\alpha_{1,e}, \alpha_{2,e}, \beta_{1,e}, \beta_{2,e}$, and a
function $\phi_e: 2^N\map \R$ such that $\phi_e(\emptyset)=0$ and
\begin{equation}
\label{eq:potential-method-cond-1}
\alpha_{1,e} \leq \frac{\phi_e(I\union\ssets{i})-\phi_e(I)}{w_i\cdot c_e(w_I+w_i)} \leq \alpha_{2,e}
\qquad\text{for all}\;\; i\in N,\; I\subseteq N\setminus \ssets{i};
\end{equation}
\begin{equation}
\label{eq:potential-method-cond-2}
\beta_{1,e}\leq\frac{\phi_e(I)}{w_I\cdot c_e(w_I)} \leq \beta_{2,e}
\qquad\text{for all}\;\; \emptyset\neq I \subseteq N.
\end{equation} 
Then the game has an $(\alpha,\beta)$-equilibrium with 
$$
\alpha= \max_{e\in E}\frac{\alpha_{2,e}}{\alpha_{1,e}}
\quad\text{and}\quad
\beta = 
\frac{\max_{e\in E} \beta_{2,e}/\alpha_{1,e}}{\min_{e\in E} \beta_{1,e}/\alpha_{1,e}}.
$$
\end{lemma}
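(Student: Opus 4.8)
The plan is to build a single global potential function from the per-resource functions $\phi_e$, take one of its minimizers as the candidate profile, and then read off both the approximate-equilibrium guarantee and the social-cost guarantee from conditions~\eqref{eq:potential-method-cond-1} and~\eqref{eq:potential-method-cond-2}. Before anything else, though, I would perform a rescaling that is the real engine of the argument: replace each $\phi_e$ by $\psi_e:=\phi_e/\alpha_{1,e}$ and set $\Phi(\vecc s):=\sum_{e\in E}\psi_e(N_e(\vecc s))$. Dividing~\eqref{eq:potential-method-cond-1} by the positive constant $\alpha_{1,e}$ turns the marginal-contribution bounds into the uniform statement $w_i c_e(w_I+w_i)\le \psi_e(I\cup\ssets{i})-\psi_e(I)\le \alpha\, w_i c_e(w_I+w_i)$, with $\alpha=\max_e \alpha_{2,e}/\alpha_{1,e}$ (note $\alpha\ge 1$ since the common fraction forces $\alpha_{1,e}\le\alpha_{2,e}$), while~\eqref{eq:potential-method-cond-2} becomes $b_1\, w_I c_e(w_I)\le \psi_e(I)\le b_2\, w_I c_e(w_I)$ with $b_1=\min_e\beta_{1,e}/\alpha_{1,e}$ and $b_2=\max_e\beta_{2,e}/\alpha_{1,e}$, so that $\beta=b_2/b_1$. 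Since the strategy space $\vecc S$ is finite, $\Phi$ attains a minimum; let $\vecc s$ be a minimizer.

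For the equilibrium guarantee, I would fix a player $i$ and a deviation $s_i'$, write $\vecc s'=(s_i',\vecc s_{-i})$ and $I_e=N_e(\vecc s_{-i})$, and expand $\Phi(\vecc s')-\Phi(\vecc s)\ge 0$. Only resources in the symmetric difference $s_i\triangle s_i'$ contribute: resources in $s_i'\setminus s_i$ contribute $+[\psi_e(I_e\cup\ssets{i})-\psi_e(I_e)]$ and resources in $s_i\setminus s_i'$ contribute the negative of the same quantity, while on $s_i\cap s_i'$ the load (and hence $\psi_e$) is identical under $\vecc s$ and $\vecc s'$. Rearranging and applying the uniform marginal bounds --- the lower bound $1$ on the $s_i\setminus s_i'$ terms (whose load equals $x_e(\vecc s)$) and the upper bound $\alpha$ on the $s_i'\setminus s_i$ terms (whose load equals $x_e(\vecc s')$) --- and cancelling the common factor $w_i$ yields $\sum_{e\in s_i\setminus s_i'} c_e(x_e(\vecc s))\le \alpha\sum_{e\in s_i'\setminus s_i} c_e(x_e(\vecc s'))$. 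Adding the common part $\sum_{e\in s_i\cap s_i'} c_e(x_e(\vecc s'))$ to both sides and using $\alpha\ge1$ gives exactly $C_i(\vecc s)\le \alpha\, C_i(\vecc s')$, i.e.~\eqref{eq:pne-def}.

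For the social-cost guarantee I would play the minimality of $\vecc s$ against an optimal profile $\vecc o$. Summing the rescaled~\eqref{eq:potential-method-cond-2} over resources (the empty-load terms vanish, since $\psi_e(\emptyset)=0$ and $x_e c_e(x_e)=0$ there) gives $C(\vecc s)=\sum_e x_e(\vecc s)c_e(x_e(\vecc s))\le \tfrac1{b_1}\Phi(\vecc s)$ and, on the other side, $\Phi(\vecc o)\le b_2\, C(\vecc o)=b_2\,\opt(\mathcal G)$. Chaining these with $\Phi(\vecc s)\le\Phi(\vecc o)$ yields $C(\vecc s)\le (b_2/b_1)\,\opt(\mathcal G)=\beta\,\opt(\mathcal G)$, so $\vecc s$ is the desired $(\alpha,\beta)$-equilibrium.

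I expect the bookkeeping of the deviation to be routine once the right objects are in place; the genuinely important step is the initial rescaling by $1/\alpha_{1,e}$. Without it the deviation inequality carries resource-dependent weights $\alpha_{1,e},\alpha_{2,e}$ that cannot be aligned across the sums over $s_i\setminus s_i'$ and $s_i'\setminus s_i$ (these range over \emph{different} resources), and one cannot collapse them into the single ratio $\max_e\alpha_{2,e}/\alpha_{1,e}$. Normalizing the marginal lower bound to $1$ is precisely what makes the ``effective cost'' charged by the potential coincide with the true cost $c_e$, so that the telescoping argument closes cleanly; the analogous normalization is what produces the clean factor $b_2/b_1$ in the social-cost bound. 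The only remaining points needing a little care are that the minimum exists (finiteness of $\vecc S$), that $\alpha\ge1$, and that empty-load resources contribute zero on both sides.
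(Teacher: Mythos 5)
Your proposal is correct and follows essentially the same route as the paper's proof: the same rescaled potential $\sum_{e\in E}\phi_e/\alpha_{1,e}$, a global minimizer as the candidate profile, the same telescoping over the symmetric difference $s_i\triangle s_i'$ using nonnegativity of costs and $\alpha\ge 1$, and the same chaining of the summed condition~\eqref{eq:potential-method-cond-2} against an optimal profile. If anything, you are slightly more explicit than the paper on two minor points (why $\alpha\ge 1$, and why empty-load resources contribute zero), but the argument is the same.
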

\begin{proof}
Define function
$
\varPhi(\vecc s)=\sum_{e\in E}\frac{1}{\alpha_{1,e}}\phi_e(N_e(\vecc s))
$
over all feasible outcomes. We will show that $\varPhi$ can serve as a desired \emph{approximate potential} function for our game; that is, for any profiles $\vecc{s},\vecc{s}'$ and any player $i$, it satisfies:
\begin{align}
\varPhi(\vecc s) \leq \varPhi(s_i',\vecc s_{-i})
		&\quad \then \quad C_i(\vecc s) \leq \alpha\cdot C_i(s_i',\vecc s_{-i}) \label{eq:approx_potential_existence}\\
\varPhi(\vecc s) \leq \varPhi(\vecc s') 
		& \quad \then \quad C(\vecc s) \leq \beta \cdot C(\vecc s'). \label{eq:approx_potential_PoS}
\end{align}
This would be enough to establish our lemma: any (global) minimizer of $\varPhi$ is
an $\alpha$-approximate equilibrium, due to~\eqref{eq:approx_potential_existence}, and at the same time,
due to~\eqref{eq:approx_potential_PoS}, its social cost is within a factor of
$\beta$ from the social cost of any other profile (and, thus, from the optimal one).
Notice also, that such a minimizer always exists, since the set $\vecc{S}$ of feasible outcomes is finite.

For~\eqref{eq:approx_potential_existence} first, denote for simplicity $N_e=N_e(\vecc s)$, $x_e=x_e(\vecc{s})$ and
$N_e'=N_e(s_i',\vecc s_{-i})$, $x_e'=x_e(s_i',\vecc s_{-i})$ for all facilities $e$. Then, we have
\begin{align*}
\varPhi(s_i',\vecc s_{-i})-\varPhi(\vecc s)
	&= \sum_{e\in E} \frac{1}{\alpha_{1,e}}\left[ \phi_e(N_e')-\phi_e(N_e)\right]\\
	&= \sum_{e\in s_i'\setminus s_i}\frac{1}{\alpha_{1,e}}\left[ \phi_e(N_e\union\ssets{i})-\phi_e(N_e)\right] + \sum_{e\in s_i\setminus s_i'}\frac{1}{\alpha_{1,e}}\left[ \phi_e(N_e\setminus\ssets{i})-\phi_e(N_e)\right]\\
	&\leq \sum_{e\in s_i'\setminus s_i}\frac{\alpha_{2,e}}{\alpha_{1,e}} w_i c_e(x_e+w_i) - \sum_{e\in s_i\setminus s_i'} w_i c_e(x_e)\\
	&\leq w_i\left[\alpha\sum_{e\in s_i'\setminus s_i} c_e(x_e+w_i) - \sum_{e\in s_i\setminus s_i'}  c_e(x_e)\right]\\
	&\leq w_i\left[\alpha \left(\sum_{e\in s_i'\setminus s_i} c_e(x_e+w_i) +\sum_{e\in s_i'\inters s_i} c_e(x_e)\right) - \left(\sum_{e\in s_i\setminus s_i'} c_e(x_e) +\sum_{e\in s_i'\inters s_i} c_e(x_e)\right)\right]\\
	&= w_i\left[\alpha C_i(s_i',\vecc s_{-i}) - C_i(\vecc s) \right].
\end{align*}
The first inequality holds due to~\eqref{eq:potential-method-cond-1}; the second due
to the definition of $\alpha$; and the third one because $\alpha\geq 1$. The fact
that the cost functions are nonnegative is a critical component in all of them as
well. The chain of inequalities above demonstrate that, if $\varPhi(s_i',\vecc s_{-i})-\varPhi(\vecc s)$ is nonnegative then $\alpha C_i(s_i',\vecc s_{-i}) - C_i(\vecc s)$ is nonnegative, thus proving~\eqref{eq:approx_potential_existence}.

For~\eqref{eq:approx_potential_PoS} next, denote $N_e=N_e(\vecc s)$, $x_e=x_e(\vecc{s})$ and
$N_e'=N_e(\vecc s')$, $x_e'=x_e(\vecc s')$. Then, we have:
\begin{align*}
\varPhi(\vecc s')-\varPhi(\vecc s) 
	& = \sum_{e\in E}\frac{1}{\alpha_{1,e}}\phi_e(N_e')-\sum_{e\in
E}\frac{1}{\alpha_{1,e}}\phi_e(N_e)\\ 
	&\leq \sum_{e\in E}\frac{\beta_{2,e}}{\alpha_{1,e}}x_e'c_e(x_e')
-\sum_{e\in E}\frac{\beta_{1,e}}{\alpha_{1,e}}x_ec_e(x_e)\\ 
	&\leq \max_{e\in E}\frac{\beta_{2,e}}{\alpha_{1,e}}\cdot \sum_{e\in E}x_e'c_e(x_e')
-\min_{e\in E}\frac{\beta_{1,e}}{\alpha_{1,e}}\cdot \sum_{e\in E}x_ec_e(x_e)\\ 
	&= \max_{e\in E}\frac{\beta_{2,e}}{\alpha_{1,e}}\cdot C(\vecc s')
-\min_{e\in E}\frac{\beta_{1,e}}{\alpha_{1,e}}\cdot C(\vecc s)\\ 
	&=\min_{e\in E}(\beta_{1,e}/\alpha_{1,e}) \cdot\left[\beta C(\vecc s') -  C(\vecc s) \right],
\end{align*}
where for the first inequality we deployed~\eqref{eq:potential-method-cond-2}. The
chain of inequalities above demonstrate that if $\varPhi(\vecc s')-\varPhi(\vecc s)$
is nonnegative then $\beta C(\vecc s') -  C(\vecc s)$ is nonnegative as well, this
establishing~\eqref{eq:approx_potential_PoS}.
\end{proof}

We continue with defining a critical notion that will act as the medium to utilize
our main black-box tool in~\cref{th:good_existence_easy}. It involves a set of
parameters, that determine how ``well'' a given cost function behaves with respect
to two specific, simple analytic properties (namely~\eqref{eq:good_1}
and~\eqref{eq:good_2}). These properties can be interpreted as bounds on the average
of the cost function over continuous intervals.

\begin{definition}[Good Cost Functions]
\label{def:good-costs}
Fix a  congestion game $\mathcal G$. A function $c:\R_+\map\R_+$ will be called \emph{$(\alpha_1,\alpha_2,\beta_1,\beta_2)$-good} (with respect to $\mathcal G$), for $\alpha_1,\alpha_2,\beta_1,\beta_2>0$, if there exists a nonnegative constant $\xi$ such that,
for all $x\in\ssets{0}\union[w_{\min},W]$, $w\in[w_{\min},w_{\max}]$:
\begin{equation}
\label{eq:good_1}
 \alpha_1\cdot c(x+w) - \xi \cdot c(w)
\;\leq\;
\frac{1}{w}\int_x^{x+w} c(t)\,dt 
\;\leq\; 
\alpha_2 \cdot c(x+w) - \xi \cdot c(w)
\end{equation}
and for all $x\in [w_{\min},W]$:
\begin{equation}
\label{eq:good_2}
 \beta_1\cdot c(x) - \xi \cdot c_{\min}(x)
\;\leq\; 
\frac{1}{x}\int_0^{x} c(t)\,dt
\;\leq\; 
\beta_2\cdot c(x) - \xi \cdot c_{\max}(x),
\end{equation}
where $c_{\min}(x)=\min_{y\in[w_{\min},x]}c(y)$, $c_{\max}(x)=\max_{y\in[w_{\min},x]}c(y)$.
\end{definition}

\begin{definition}[Good Games]
\label{def:good-games}
A congestion game will be called $\ssets{(\alpha_{1,j},\alpha_{2,j},\beta_{1,j},\beta_{2,j})}_{j\in J}$-good if any cost function is a conical combination of such good functions. Formally, for any $e\in E$ there exists a nonempty $J_e\subseteq J$ and nonnegative constants $\ssets{\lambda_{e,j}}_{j\in J_e}$, such that 
\begin{equation*}
c_e(t)= \sum_{j\in J_e} \lambda_{e,j} c_j(t)
\end{equation*}
where, for all $j\in J$, $c_j$ is a  $(\alpha_{1,j},\alpha_{2,j},\beta_{1,j},\beta_{2,j})$-good function (see~\cref{def:good-costs}).
\end{definition}

\begin{remark}
\label{remark:conical-to-simple}
Notice that an important special case of~\cref{def:good-games} is when $J=E$, $J_e=\{e\}$, and
$\lambda_{e,e}=1$, meaning that the actual cost functions
of the game are good themselves. As a matter of fact, it is not hard to see that any
good game $\mathcal G$ can be transformed to a strategically equivalent one
$\mathcal G'$ that has that property.  First, replace each resource $e$ of $\mathcal
G$ with a gadget of ``parallel'' resources $\ssets{(e,j)\;|\;j\in J_e}$, each having a
cost function of $c_{(e,j)}(t)=\lambda_{e,j}c_j(t)$; this results in a strategically
equivalent game $\mathcal G'$ with resources $E'=\ssets{(e,j)\;|\;e\in E,\; j\in
J_e}$. Next, just observe that~\cref{def:good-costs} is invariant under nonnegative
scalar multiplication: since functions $c_j$ satisfy conditions \eqref{eq:good_1}
and \eqref{eq:good_2}, so do functions $\lambda_{e,j}\cdot c_j$ that are exactly the
cost functions of the new game $\mathcal G'$.
\end{remark}
\begin{remark}[Increasing Good Functions]
\label{remark:good-increasing-simpler}
If a cost function is nondecreasing, then~\eqref{eq:good_2} can be replaced by the (stronger, sufficient) condition:
\begin{equation}
\label{eq:good_2_increasing}
\beta_1 c(x) \;\leq\; \frac{1}{x}\int_{0}^x c(t)\,dt \;\leq\; (\beta_2-\xi) c(x),
\tag{\ref*{eq:good_2}$^\prime$} 
\end{equation}
since $0 \leq c(y) \leq c(x)$ for any $y\in[w_{\min},x]$.
\end{remark}

Now we are ready to state our main tool. This is essentially the interface of our
entire framework: under the hood it uses a specific potential function form
(see~\eqref{eq:potential-general-form}), but its statement involves only the
goodness parameters of the cost functions, as defined above. In that way, one can
readily derive meaningful bounds about the existence of $(\alpha,\beta)$-equilibria
in a black-box way, just by studying the simple analytic properties given
in~\eqref{def:good-costs} and the plugging the parameters in the theorem below:

\begin{theorem}
\label{th:good_existence_easy}
Any $\sset{(\alpha_{1,j},\alpha_{2,j},\beta_{1,j},\beta_{2,j})}_{j\in J}$-good congestion game has an $(\alpha,\beta)$-equilibrium with
$$
\alpha = \max_{j\in J}\frac{\alpha_{2,j}}{\alpha_{1,j}}
\qquad\text{and}\qquad
\beta = \frac{\max_{j\in J} \beta_{2,j}/\alpha_{1,j}}{\min_{j\in J} \beta_{1,j}/\alpha_{1,j}}.
$$
\end{theorem}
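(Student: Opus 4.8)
The plan is to deduce the theorem from \cref{lemma:potential-method} by exhibiting an explicit per-resource set function $\phi_e$, after first peeling off the conical-combination structure. First I would apply \cref{remark:conical-to-simple} to pass to a strategically equivalent game whose every resource carries a cost function that is \emph{itself} good: each resource of the transformed game corresponds to a pair $(e,j)$ and has cost $\lambda_{e,j}c_j$, which is $(\alpha_{1,j},\alpha_{2,j},\beta_{1,j},\beta_{2,j})$-good by the scale-invariance noted in the remark. Since strategic equivalence preserves the existence of $(\alpha,\beta)$-equilibria, it suffices to treat a game in which each resource $e$ has an $(\alpha_{1,e},\alpha_{2,e},\beta_{1,e},\beta_{2,e})$-good cost $c_e$ with associated constant $\xi_e\ge 0$; identifying each such resource with its index $j\in J$, the maxima and minima over resources reproduce (or are dominated by, which still suffices for an upper bound on $\alpha$ and $\beta$) those over $j\in J$ in the statement. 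For each resource I then propose the potential
\[
\phi_e(I)=\int_0^{w_I} c_e(t)\,dt \;+\; \xi_e\sum_{i\in I} w_i\,c_e(w_i),\qquad I\subseteq N,
\]
which satisfies $\phi_e(\emptyset)=0$, and feed it, together with the goodness parameters, into \cref{lemma:potential-method}. It then only remains to verify the two hypotheses \eqref{eq:potential-method-cond-1} and \eqref{eq:potential-method-cond-2}.

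Condition \eqref{eq:potential-method-cond-1} is the easy one. The increment is
\[
\phi_e(I\union\ssets{i})-\phi_e(I)=\int_{w_I}^{w_I+w_i} c_e(t)\,dt+\xi_e\,w_i\,c_e(w_i).
\]
Applying \eqref{eq:good_1} with $x=w_I$ and $w=w_i$ --- admissible since $w_I\in\ssets{0}\union[w_{\min},W]$ and $w_i\in[w_{\min},w_{\max}]$ --- then adding $\xi_e w_i c_e(w_i)$ throughout and dividing by $w_i c_e(w_I+w_i)$, gives exactly the two-sided bound between $\alpha_{1,e}$ and $\alpha_{2,e}$. The role of the extra summation term in $\phi_e$ here is precisely to absorb the $-\xi c(w)$ correction on both sides of \eqref{eq:good_1}.

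The main obstacle is \eqref{eq:potential-method-cond-2}, and this is where the $c_{\min},c_{\max}$ terms of \eqref{eq:good_2} are engineered to cancel the summation term. The observation I would isolate is that for every $i\in I$ one has $w_i\in[w_{\min},w_I]$, whence $c_{e,\min}(w_I)\le c_e(w_i)\le c_{e,\max}(w_I)$ (these being the quantities of \cref{def:good-costs} applied to $c_e$); weighting by $w_i$ and summing yields
\[
w_I\,c_{e,\min}(w_I)\;\le\;\sum_{i\in I} w_i\,c_e(w_i)\;\le\; w_I\,c_{e,\max}(w_I).
\]
Now multiplying \eqref{eq:good_2} (with $x=w_I$) by $w_I$ and adding the summation term, the $\xi_e$-contributions on the two outer sides are sandwiched by exactly $\xi_e w_I c_{e,\min}(w_I)$ and $\xi_e w_I c_{e,\max}(w_I)$, so the $c_{\min}/c_{\max}$ corrections cancel and leave $\beta_{1,e}\,w_I c_e(w_I)\le\phi_e(I)\le\beta_{2,e}\,w_I c_e(w_I)$, i.e.\ \eqref{eq:potential-method-cond-2}. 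With both hypotheses verified, \cref{lemma:potential-method} yields an $(\alpha,\beta)$-equilibrium with the claimed $\alpha$ and $\beta$. I expect this final cancellation to be the delicate step: the weighted-cost term must be chosen so that, after the $c_{\min}/c_{\max}$ sandwiching, it exactly neutralizes the correction terms in \eqref{eq:good_2} --- which is the very reason those two extra terms appear in the definition of good costs rather than a plain averaging bound.
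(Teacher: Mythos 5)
Your proposal is correct and follows essentially the same route as the paper's own proof: the same reduction via \cref{remark:conical-to-simple}, the same potential $\phi_e(I)=\int_0^{w_I}c_e(t)\,dt+\xi_e\sum_{i\in I}w_ic_e(w_i)$ fed into \cref{lemma:potential-method}, and the same weighted-average sandwich $w_I\,c_{\min}(w_I)\leq\sum_{i\in I}w_i c_e(w_i)\leq w_I\,c_{\max}(w_I)$ to cancel the $\xi$-corrections in \eqref{eq:good_2}. The only cosmetic difference is that you spell out why restricting the maxima/minima to the subset of indices actually used still yields the claimed $(\alpha,\beta)$, a point the paper subsumes under ``without loss of generality.''
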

\begin{proof}
First notice that, by~\cref{remark:conical-to-simple}, it is without loss to assume that $J=E$ and that any cost function $c_e$, $e\in E$, is $(\alpha_{1,e},\alpha_{2,e},\beta_{1,e},\beta_{2,e})$-good. Denote by $\xi_e$ (a choice of) the parameter $\xi$ for which resource $e$ satisfies~\cref{def:good-costs}.

We will then show that functions
\begin{equation}
\label{eq:potential-general-form}
\phi_e(I) = \int_{0}^{w_I} c_e(t)\, dt + \xi_e \sum_{i\in I} w_i c_e(w_i)
\end{equation}
satisfy the conditions of~\cref{lemma:potential-method},

Fix some resource $e\in E$, a player $i$ and a subset $I\subseteq
N\setminus\ssets{i}$ of remaining players. For simplicity, from now on we drop the
$e$ subscripts and also denote $w=w_i$ and $x=w_I$. Then,
\begin{align*}
\phi(I\union\ssets{i}) - \phi(I)
	&= \int_{0}^{x+w} c_e(t)\, dt - \int_{0}^{x} c_e(t)\, dt
	+ \xi_e\left(\sum_{j\in I} w_j c_e(w_j) - \sum_{j\in I\union\ssets{i}} w_j c_e(w_j) \right)\\
	&=\int_{x}^{x+w} c(t)\, dt + \xi w c(w).
\end{align*}
So, by deploying~\eqref{eq:good_1}, it is not difficult to see that
$$
\alpha_1 c(x+w)\leq \frac{1}{w}\left[\phi(I\union\ssets{i}) - \phi(I)\right] \leq \alpha_2 c(x+w),
$$
and thus condition~\eqref{eq:potential-method-cond-1} of~\cref{lemma:potential-method} is indeed satisfied.

Next, observe that since $w_j\in[w_{\min},w_{\max}]$ for all $j\in I$, and $x=\sum_{j\in I} w_j$, we have the bounds
\begin{equation}
\label{eq:good-proof-weighted-average}
c_{\min}(x)
\leq \min_{j\in I} c(w_j)
\leq \frac{1}{x}\sum_{j\in I}w_jc(w_j) 
\leq \max_{j\in I} c(w_j)
\leq c_{\max}(x),
\end{equation}
where the first and the last inequalities hold due to the fact that $\ssets{w_j\fwh{j\in I}}\subseteq [w_{\min},x]$.
Assuming $I\neq \emptyset$, we have that $x\in
[w_{\min},W]$ and so we can use~\eqref{eq:good-proof-weighted-average} and~\eqref{eq:good_2} to bound $\frac{1}{x}\phi(I)$ from below and above by:
$$
\beta_1 c(x) \leq
\frac{1}{x}\phi(I)= \frac{1}{x}\int_{0}^x c(t)\, dt+\xi \frac{1}{x}\sum_{j\in I}w_jc(w_j) 
\leq \beta_2 c(x).
$$
Thus, condition~\eqref{eq:potential-method-cond-2} of~\cref{lemma:potential-method} is also satisfied.

\end{proof}

\section{Applications}
\label{sec:applications}
In this section we present several applications of our black-box
\cref{th:good_existence_easy}, that demonstrate both its power and simplicity. In
accordance to the nature of that tool, they all share a common structure: first, we
prove lemmas describing the right goodness parameters (according
to~\cref{def:good-costs}) for each special cost function of interest
(see~\cref{lemma:polynomials-good,lemma:constant-good,lemma:concave-good,lem:fair-costs-good});
then, we plug them in~\cref{th:good_existence_easy} to derive our bounds
(see~\cref{th:games-poly,th:games-concave,thm:fair-cost-pne}).

\subsection{Polynomial Costs}
\label{sec:poly-simple}

We start with polynomial cost functions, arguably the most studied setting in
congestion games. We recover the result from Caragiannis and
Fanelli~\cite{Caragiannis:2019aa} that, for polynomials of degree at most $d$ with
nonnegative coefficients, there exist $(d+\delta)$-approximate equilibria with
social cost at most $\frac{d+1}{d+\delta}$ times the optimum, for any
$\delta\in[0,1]$. This is the currently best known guarantee of
$(\alpha,\beta)$-equilibria for polynomial cost functions. Let us begin by analysing
the goodness parameters of each monomial.

\begin{lemma}
\label{lemma:polynomials-good}
Any monomial of degree $d\geq 1$ is $\left(\mu,1,\frac{1}{d+1},\mu\right)$-good, for any $\mu\in[\frac{1}{d+1},\frac{1}{d}]$.
\end{lemma}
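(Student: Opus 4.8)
The plan is to reduce everything to the normalized monomial $c(t)=t^d$: since \cref{def:good-costs} is invariant under positive scalar multiplication (as observed in \cref{remark:conical-to-simple}), it suffices to treat $c(t)=t^d$. As $t^d$ is nondecreasing, I would verify condition~\eqref{eq:good_2} through its stronger surrogate~\eqref{eq:good_2_increasing} from \cref{remark:good-increasing-simpler}. The whole argument hinges on a single guess for the free constant, namely $\xi=\mu-\frac{1}{d+1}$, which is nonnegative precisely because $\mu\geq\frac{1}{d+1}$; I will then show that both goodness conditions hold with this choice (and, in fact, that it is essentially forced).

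Condition~\eqref{eq:good_2_increasing} is the easy half. Computing $\frac{1}{x}\int_0^x t^d\,dt=\frac{x^d}{d+1}$ and recalling that here $\beta_1=\frac{1}{d+1}$ and $\beta_2=\mu$, one checks that both the lower bound $\beta_1 c(x)=\frac{x^d}{d+1}$ and the upper bound $(\beta_2-\xi)c(x)=\frac{x^d}{d+1}$ match this average exactly, so~\eqref{eq:good_2_increasing} holds with equality throughout.

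The substantive part is condition~\eqref{eq:good_1}, where I would exploit homogeneity. Writing the middle term as $\frac{1}{w}\int_x^{x+w}t^d\,dt=\frac{(x+w)^{d+1}-x^{d+1}}{w(d+1)}$ and dividing the whole chain by $w^d$, the substitution $u=x/w\geq 0$ collapses~\eqref{eq:good_1} to the single-variable statement
\[
\mu(u+1)^d-\xi \;\leq\; \frac{(u+1)^{d+1}-u^{d+1}}{d+1} \;\leq\; (u+1)^d-\xi
\qquad\text{for all } u\geq 0,
\]
with the boundary case $x=0$ corresponding to $u=0$. Setting $g(u)=\frac{(u+1)^{d+1}-u^{d+1}}{d+1}$, $p(u)=\mu(u+1)^d-g(u)$ and $h(u)=(u+1)^d-g(u)$, the two inequalities read $p(u)\leq\xi\leq h(u)$, so it is enough to show $\sup_{u\geq0}p(u)=p(0)=\mu-\frac{1}{d+1}=\xi$ and $\inf_{u\geq0}h(u)=h(0)=\frac{d}{d+1}\geq\xi$ (the last inequality being just $\mu\leq 1$).

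These two monotonicity facts I would obtain by differentiating. Both $p'(u)=\mu d(u+1)^{d-1}-\big[(u+1)^d-u^d\big]$ and $h'(u)=d(u+1)^{d-1}-\big[(u+1)^d-u^d\big]$ feature the same bracket, which I would control via the telescoping identity $(u+1)^d-u^d=\sum_{k=0}^{d-1}(u+1)^k u^{d-1-k}$: since each of the $d$ summands is at most $(u+1)^{d-1}$ while the $k=d-1$ summand already equals $(u+1)^{d-1}$, one gets $(u+1)^{d-1}\leq(u+1)^d-u^d\leq d(u+1)^{d-1}$. The right inequality yields $h'\geq 0$, and the left inequality together with $\mu d\leq 1$ yields $p'\leq 0$. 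The step to get right is this last one, where the two endpoints of the interval $[\frac{1}{d+1},\frac1d]$ play distinct roles: $\mu\geq\frac{1}{d+1}$ secures $\xi\geq0$ and the~\eqref{eq:good_2} bound, whereas $\mu\leq\frac1d$ is exactly what makes $p$ nonincreasing and hence delivers the lower bound in~\eqref{eq:good_1}.
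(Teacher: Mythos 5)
Your proof is correct, and for the substantive condition \eqref{eq:good_1} it takes a genuinely different route from the paper. The two arguments share the same skeleton: both reduce to $c(t)=t^d$ by scalar invariance, both use the reparametrization $\xi=\mu-\frac{1}{d+1}$ with $\xi\in[0,\frac{1}{d(d+1)}]$, and both dispose of \eqref{eq:good_2} identically via \eqref{eq:good_2_increasing}, where all bounds hold with equality. The divergence is in \eqref{eq:good_1}. The paper stays purely algebraic: it expands $\frac{1}{w}\int_x^{x+w}t^d\,dt+\xi w^d$ and $(x+w)^d$ as polynomials in $x$ and $w$ via the binomial theorem (using the identity $\frac{1}{d+1}\binom{d+1}{j}=\frac{1}{d-j+1}\binom{d}{j}$) and compares the two expansions coefficient by coefficient; since every monomial $x^jw^{d-j}$ is nonnegative, $\alpha_1$ and $\alpha_2$ are read off as the minimum and maximum of the coefficient ratios. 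You instead exploit degree-$d$ homogeneity to collapse \eqref{eq:good_1} into a one-variable inequality in $u=x/w$, and settle it by calculus: the telescoping bound $(u+1)^{d-1}\le(u+1)^d-u^d\le d(u+1)^{d-1}$ gives $p'\le 0$ (using $\mu d\le 1$) and $h'\ge 0$, pinning the extremal case at $u=0$. The trade-offs are minor and both routes yield identical constants. The paper's coefficient comparison avoids calculus and shows directly which term is binding (the $j=1$ ratio $\frac{1}{d}$, which is exactly what caps $\xi$ at $\frac{1}{d(d+1)}$); your reduction makes the tightness structure more transparent---both inequalities in \eqref{eq:good_1} are extremal at $x=0$, and the two endpoints of $[\frac{1}{d+1},\frac{1}{d}]$ are seen to correspond precisely to the constraints $\xi\ge 0$ and $p'\le 0$. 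One cosmetic point: your quantification over all $u\ge 0$ is (harmlessly) stronger than what the restricted domain $x\in\{0\}\cup[w_{\min},W]$, $w\in[w_{\min},w_{\max}]$ of \cref{def:good-costs} actually requires, just as the paper's argument also holds for all nonnegative $x,w$.
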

\begin{proof}
Fix a degree $d\geq 1$. We will show that the function $c(x)=x^d$ satisfies conditions~\eqref{eq:good_1} and~\eqref{eq:good_2} with 
$$
\alpha_1=\xi+\frac{1}{d+1},\qquad 
\alpha_2=1,\qquad 
\beta_1= \frac{1}{d+1},\qquad 
\beta_2=\xi+\frac{1}{d+1},
$$
for all $\xi\in[0,\frac{1}{d(d+1)}]$.
Then, performing the change of variables $\mu=\xi+\frac{1}{d+1}$ establishes our lemma, since $\mu\in[0+\frac{1}{d+1},\frac{1}{d(d+1)}+\frac{1}{d+1}]=[\frac{1}{d+1},\frac{1}{d}]$.

To prove the bounds in $\alpha_1$, $\alpha_2$, we are interested in the quantity
\[a(w,x)=\frac{1}{w}\int_x^{x+w}c(t)dt+\xi c(w)=\frac{1}{(d+1)w}\left((x+w)^{d+1}-x^{d+1}\right)+\xi w^d.\]
By applying the binomial expansion rules, and collecting similar terms, we can further write
\begin{align}
a(w,x) &=\frac{1}{(d+1)w}\left(w^{d+1}+\sum_{j=1}^d\binom{d+1}{j}x^jw^{d+1-j}+x^{d+1}-x^{d+1}\right)+\xi w^{d+1} \nonumber\\
	&=\frac{1}{d+1}\left(w^d+\sum_{j=1}^d\binom{d+1}{j}x^jw^{d-j}\right)+\xi w^d \nonumber\\
	&=\left(\frac{1}{d+1}+\xi\right)w^d+\sum_{j=1}^d\frac{1}{d+1}\binom{d+1}{j}x^jw^{d-j} \nonumber\\
	&=\left(\xi+\frac{1}{d+1}\right)w^d+\sum_{j=1}^d\frac{1}{d-j+1}\binom{d}{j}x^jw^{d-j},\label{eq:monomialexpansion1}
\end{align}
where in the last step we simply use the fact that, for $1\leq j\leq d$,
$\frac{1}{d+1}\binom{d+1}{j}=\frac{1}{d-j+1}\binom{d}{j}$. We would like to get
upper and lower bounds on $a(w,x)$ involving $c(x+w)$, which can be written as
\begin{equation}
\label{eq:monomialexpansion2}
c(x+w)=(x+w)^d=w^d+\sum_{j=1}^d\binom{d}{j}x^jw^{d-j}.
\end{equation} 
By comparing the coefficients of \eqref{eq:monomialexpansion1} and
\eqref{eq:monomialexpansion2}, we get that \eqref{eq:good_1} is satisfied with
\begin{align*}
\alpha_1 &= \min\left\{\xi+\frac{1}{d+1},\min_{j=1,\dots,d}\left\{\frac{1}{d-j+1}\right\}\right\} 
	= \min\left\{\xi+\frac{1}{d+1},\frac{1}{d}\right\} 
	=  \xi+\frac{1}{d+1}\\
\alpha_2 &= \max\left\{\xi+\frac{1}{d+1},\max_{j=1,\dots,d}\left\{\frac{1}{d-j+1}\right\}\right\} 
	= \max\left\{\xi+\frac{1}{d+1},1\right\} =1,
\end{align*}
where to compute the maxima and minima we used the fact that
$\xi+\frac{1}{d+1} \leq \frac{1}{d(d+1)}+\frac{1}{d+1} = \frac{1}{d} \leq 1$,
due to the assumptions that $\xi \leq \frac{1}{d(d+1)}$ and $d\geq 1$.

For the bounds in $\beta_1,\beta_2$, since $x^d$ is nondecreasing we can use the simpler condition \eqref{eq:good_2_increasing}. Then, we only have to observe that
\begin{align*}\frac{1}{x}\int_0^xc(t)dt
&=\frac{1}{d+1}x^d=\frac{1}{d+1}c(x)\\
\intertext{and}
\frac{1}{x}\int_0^xc(t)dt+\xi c(x)
&=\frac{1}{d+1}x^d+\xi x^d=\left(\xi+\frac{1}{d+1}\right)c(x).
\end{align*}
\end{proof}

For the special case of constant cost functions, i.e., $0$-degree monomials, it is not difficult to get the following:
\begin{lemma}
\label{lemma:constant-good}
Any constant function is $\left(1,1,1,1\right)$-good.
\end{lemma}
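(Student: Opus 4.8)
The plan is to verify both defining inequalities of \cref{def:good-costs} directly, with the simplest possible choice of the free parameter, namely $\xi=0$. Write the constant function as $c(t)=a$ for some $a\geq 0$. The key observation is that averaging a constant over any interval returns the same constant, so all the integral averages appearing in \eqref{eq:good_1} and \eqref{eq:good_2} collapse to $a$, after which the correction terms multiplied by $\xi$ can simply be switched off.

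Concretely, for \eqref{eq:good_1} I would compute $\frac{1}{w}\int_x^{x+w}c(t)\,dt=\frac{1}{w}\int_x^{x+w}a\,dt=a$, while with $\alpha_1=\alpha_2=1$ and $\xi=0$ both the lower and the upper bound equal $c(x+w)=a$. Hence the two-sided inequality holds with equality throughout, for every admissible $x\in\ssets{0}\union[w_{\min},W]$ and $w\in[w_{\min},w_{\max}]$. Similarly, for \eqref{eq:good_2} I would compute $\frac{1}{x}\int_0^x c(t)\,dt=a$, and with $\beta_1=\beta_2=1$ and $\xi=0$ both bounds again reduce to $c(x)=a$—the terms $c_{\min}(x)$ and $c_{\max}(x)$ drop out, being multiplied by $\xi=0$—so equality holds once more.

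Since all inequalities are satisfied with the single choice $\xi=0$, the constant function is $(1,1,1,1)$-good, as claimed. There is essentially no obstacle here: the only thing to notice is that the correction terms can be turned off entirely for constants, after which both conditions become the trivial identity ``the average of a constant equals the constant.'' As a sanity check, this matches the formal $d=0$ specialization of \cref{lemma:polynomials-good} at $\mu=1$, a case the monomial lemma excludes only because it assumes $d\geq 1$ (and because $1/d$ is undefined at $d=0$).
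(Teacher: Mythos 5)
Your proof is correct and is essentially identical to the paper's own argument: both take $\xi=0$ and observe that every integral average of a constant function equals the constant itself, so conditions \eqref{eq:good_1} and \eqref{eq:good_2} hold with equality for $\alpha_1=\alpha_2=\beta_1=\beta_2=1$. Your closing remark relating this to the $d=0$ case of \cref{lemma:polynomials-good} is a nice sanity check but not part of the paper's proof; nothing further is needed.
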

\begin{proof}
Follows directly from \cref{def:good-costs} by taking $\xi=0$: for any constant function $c(x)=c$ we have
$$c(x+w)=c(x)=\frac{1}{w}\int_x^{x+w}c(t)dt=\frac{1}{x}\int_0^xc(t)dt=c.$$
\end{proof}

\begin{theorem}
\label{th:games-poly}
Any weighted polynomial congestion game of degree $d\geq 1$ has an $(\lambda,\frac{d+1}{\lambda})$-equilibrium, 
for any $\lambda\in[d,d+1]$.
\end{theorem}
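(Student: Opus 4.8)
The plan is to realize a degree-$d$ polynomial congestion game as a good game in the sense of \cref{def:good-games} and then invoke the master tool \cref{th:good_existence_easy}. Since every cost function is a polynomial with nonnegative coefficients and degree at most $d$, it is a conical combination of the monomials $t^0,t^1,\dots,t^d$; hence we take $J=\{0,1,\dots,d\}$ with $c_j(t)=t^j$. The goodness parameters of each building block are already at hand: \cref{lemma:polynomials-good} tells us that each monomial $t^j$ with $j\ge 1$ is $(\mu_j,1,\frac{1}{j+1},\mu_j)$-good for any $\mu_j\in[\frac{1}{j+1},\frac{1}{j}]$, while \cref{lemma:constant-good} gives that the constant $t^0$ is $(1,1,1,1)$-good. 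The only freedom left to exploit is the choice of the free parameters $\mu_j$, one per degree.

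The heart of the argument is picking these $\mu_j$ wisely. For the top degree I would set $\mu_d=\frac{1}{\lambda}$, which is legal precisely because $\lambda\in[d,d+1]$ guarantees $\frac{1}{\lambda}\in[\frac{1}{d+1},\frac{1}{d}]$; for every lower degree $1\le j<d$ I would instead take the left endpoint $\mu_j=\frac{1}{j+1}$. With these choices the two ingredients of \cref{th:good_existence_easy} become transparent. For the approximation factor, $\frac{\alpha_{2,j}}{\alpha_{1,j}}=\frac{1}{\mu_j}$ equals $\lambda$ for $j=d$, equals $j+1\le d\le\lambda$ for $1\le j<d$, and equals $1$ for the constant, so the maximum is exactly $\alpha=\lambda$. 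For the cost factor, every monomial contributes $\frac{\beta_{2,j}}{\alpha_{1,j}}=\frac{\mu_j}{\mu_j}=1$, so the numerator of $\beta$ is $1$; the denominator $\frac{\beta_{1,j}}{\alpha_{1,j}}=\frac{1}{(j+1)\mu_j}$ equals $\frac{\lambda}{d+1}$ for $j=d$ and equals $1$ for every $j<d$ (and for the constant), so its minimum is $\frac{\lambda}{d+1}$ since $\lambda\le d+1$. Dividing yields $\beta=\frac{d+1}{\lambda}$, as desired.

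The main obstacle, and the one subtle point worth flagging, is exactly this choice of $\mu_j$ for the lower degrees. The tempting ``symmetric'' choice $\mu_j=\frac{1}{j}$ (the right endpoint, maximizing each $\mu_j$) does keep $\alpha$ small, but it makes $\frac{\beta_{1,j}}{\alpha_{1,j}}=\frac{j}{j+1}$, whose minimum over $j$ is $\frac{1}{2}$ at $j=1$ (for $d\ge 2$); this would force $\beta\ge 2$ regardless of $d$, destroying the bound. Pinning the lower degrees to their left endpoint $\mu_j=\frac{1}{j+1}$ is what neutralizes their effect on $\beta$ (pushing $\frac{\beta_{1,j}}{\alpha_{1,j}}$ up to $1$) while keeping $\frac{\alpha_{2,j}}{\alpha_{1,j}}=j+1\le\lambda$ harmless for $\alpha$. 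Once this balancing is in place the rest is a direct substitution into \cref{th:good_existence_easy}, and the two extreme points $(d,1+\frac1d)$ and $(d+1,1)$ of \cref{table:results} fall out by plugging in $\lambda=d$ and $\lambda=d+1$.
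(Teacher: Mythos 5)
Your proposal is correct and follows exactly the paper's own proof: the same decomposition into monomials, the same parameter choices ($\mu_d=\frac{1}{\lambda}$ for the top degree, $\mu_j=\frac{1}{j+1}$ for lower degrees, \cref{lemma:constant-good} for the constant), and the same substitution into \cref{th:good_existence_easy} yielding $\alpha=\lambda$ and $\beta=\frac{d+1}{\lambda}$. Your added observation about why the right-endpoint choice $\mu_j=\frac{1}{j}$ would ruin the $\beta$ bound is a nice piece of insight that the paper leaves implicit, but it does not change the argument.
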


\begin{proof}
Fix a maximum degree $d\geq 1$ and a parameter $\lambda\in[d,d+1]$.
Utilizing \cref{lemma:polynomials-good} with $\mu =\frac{1}{k+1}$ and \cref{lemma:constant-good}, we can see that monomials of degree $k=0,\dots,d-1$ are $(\frac{1}{k+1},1,\frac{1}{k+1},\frac{1}{k+1})$-good; and utilizing~\cref{lemma:polynomials-good} with $\mu=\frac{1}{\lambda}$ we get that the monomial of degree $d$ is $(\frac{1}{\lambda},1,\frac{1}{d+1},\frac{1}{\lambda})$-good.

Since any polynomial of degree (at most) $d$ is a conical combination of monomials of degree $k=0,1,\dots,d$, in light of~\cref{def:good-games}, 
we can deduce that our game is $\left\{(\alpha_{1,k},\alpha_{2,k},\beta_{1,k},\beta_{2,k})\right\}_{k=0,\ldots,d}$-good, with
\begin{equation*}
\label{eq:poly-games-good-parameters}
\alpha_{1,k}=
\begin{cases}
\frac{1}{\lambda}, &k=d,\\ 
\frac{1}{k+1}, &k<d;
\end{cases}
\quad
\alpha_{2,k}=1;
\quad
\beta_{1,k}=
\begin{cases}
\frac{1}{d+1}, &k=d,\\ 
\frac{1}{k+1}, &k<d;
\end{cases}
\quad
\beta_{2,k}=
\begin{cases}
\frac{1}{\lambda}, &k=d,\\ 
\frac{1}{k+1}, &k<d.
\end{cases}
\end{equation*}
Thus, by~\cref{th:good_existence_easy} we conclude that our game has an $(\alpha,\beta)$-equilibrium with
\begin{align*}
\alpha 
 &=\max_{0\leq k\leq d}\frac{\alpha_{2,k}}{\alpha_{1,k}}
  =\max\left\{1,2,\dots,d,\lambda\right\}
  =\lambda,
\intertext{and}
\beta 
	&= \frac{\max\limits_{0\leq k\leq d}\frac{\beta_{2,k}}{\alpha_{1,k}}}{\min\limits_{0\leq k\leq d} \frac{\beta_{1,k}}{\alpha_{1,k}}}
	= \frac{1}{\min\left\{1,\dots,1,\frac{1/(d+1)}{1/\lambda}\right\}} 
	= \max\sset{1,\frac{d+1}{\lambda}} = \frac{d+1}{\lambda}.
\end{align*}
\end{proof}

The parameter $\lambda$ quantifies the trade-off curve between the approximation
guarantee on the existence of $\alpha$-approximate equilibria and their PoS. At one
extreme case $\lambda=d+1$, we get that $\alpha=d+1$ and $\beta=1$; in other words,
there always exist $(d+1)$-approximate equilibria with an optimal PoS of $1$ (as a
matter of fact, from~\cite{cggs2018} we already know that every social optimum is
itself a $(d+1)$-approximate equilibrium). At the other extreme case $\lambda=d$, we
get that
\[\alpha=d,\qquad\beta= \frac{d+1}{d}=1+\frac{1}{d};\]
in other words, there always exist $d$-approximate equilibria with PoS at most $1+\frac{1}{d}$.

\subsection{Concave Costs}
\label{sec:concave-simple}

We now look at nondecreasing concave cost functions. The best known result in this
setting is due to Hansknecht et al.~\cite{Hansknecht2014}, who state that
$3/2$-approximate equilibria exist. However, the proof in their paper
is not complete. Moreover, the PoS of the existing approximate equilibria is not
discussed. In this section, not only we provide a simpler proof of this result, but
we also extend it for a range of $\lambda$-approximate equilibria with
$\lambda\in[3/2,2]$, and for a guarantee on the PoS.

\begin{lemma}
\label{lemma:concave-good}
Any nondecreasing concave function is $(\mu,\mu+\frac{1}{2},\frac{1}{2},\mu+\frac{1}{2})$-good, for all $\mu\in[\frac{1}{2},1]$.
\end{lemma}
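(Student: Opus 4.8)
The plan is to take $\xi = \mu - \frac{1}{2}$, which is nonnegative precisely because $\mu \geq \frac{1}{2}$, and then verify both goodness conditions using two elementary consequences of concavity. Throughout I would exploit that $c$ is nondecreasing and that $c(0)\geq 0$ (the codomain is $\R_+$). The first tool is the \emph{star-shaped} inequality $c(t) \geq \frac{t}{s} c(s)$ for $0 \leq t \leq s$, which follows by writing $t$ as the convex combination $\frac{t}{s}\cdot s + (1-\frac{t}{s})\cdot 0$ and applying concavity together with $c(0)\geq 0$. The second is \emph{subadditivity} $c(a+b) \leq c(a)+c(b)$, obtained by summing the two star-shaped estimates $c(a)\geq \frac{a}{a+b}c(a+b)$ and $c(b)\geq \frac{b}{a+b}c(a+b)$.

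I would first dispatch condition \eqref{eq:good_2}. Since $c$ is nondecreasing, I can use the simpler form \eqref{eq:good_2_increasing} of \cref{remark:good-increasing-simpler}; here $\beta_2-\xi = (\mu+\frac{1}{2})-(\mu-\frac{1}{2}) = 1$, so I only need $\frac{1}{2}c(x) \leq \frac{1}{x}\int_0^x c(t)\,dt \leq c(x)$. The upper bound is immediate from monotonicity (as $c(t)\leq c(x)$ on $[0,x]$), and the lower bound follows by integrating the star-shaped inequality $c(t)\geq \frac{t}{x}c(x)$ over $[0,x]$, which yields exactly $\frac{1}{2}c(x)$.

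For condition \eqref{eq:good_1} I must sandwich the average $\frac{1}{w}\int_x^{x+w}c(t)\,dt$ between $\mu\, c(x+w) - \xi\, c(w)$ and $(\mu+\frac{1}{2})c(x+w) - \xi\, c(w)$. The upper bound reduces, via $\frac{1}{w}\int_x^{x+w}c \leq c(x+w)$ (monotonicity), to $(\mu-\frac{1}{2})c(w) \leq (\mu-\frac{1}{2})c(x+w)$, which holds because $\mu\geq\frac{1}{2}$ and $w\leq x+w$. The lower bound is the crux: I would start from the Hermite--Hadamard (chord) estimate $\frac{1}{w}\int_x^{x+w}c(t)\,dt \geq \frac{c(x)+c(x+w)}{2}$, valid since the chord of a concave function lies below it, and it then suffices to prove $\mu\, c(x+w) - (\mu-\frac{1}{2})c(w) \leq \frac{c(x)+c(x+w)}{2}$. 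This rearranges to $(\mu-\frac{1}{2})\bigl(c(x+w)-c(w)\bigr) \leq \frac{1}{2}c(x)$, at which point I would use $\mu-\frac{1}{2}\leq \frac{1}{2}$ (since $\mu\leq 1$) to reduce to $c(x+w)-c(w)\leq c(x)$, which is exactly subadditivity.

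The main obstacle is this last lower bound, as it is the only place where all three hypotheses — concavity (through Hermite--Hadamard and subadditivity), monotonicity, and the constraint $\mu\in[\frac{1}{2},1]$ — must be combined simultaneously; the choice $\xi = \mu-\frac{1}{2}$ is precisely what makes the coefficients line up. The remaining checks are routine, including the boundary case $x=0$ of \eqref{eq:good_1}, where both bounds collapse to $\frac{1}{2}c(w) \leq \frac{1}{w}\int_0^w c(t)\,dt \leq c(w)$, settled by the star-shaped inequality and monotonicity respectively.
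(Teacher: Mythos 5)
Your proof is correct and follows essentially the same route as the paper's: the same substitution $\xi=\mu-\tfrac{1}{2}$, the Hermite--Hadamard chord bound plus subadditivity for the lower half of \eqref{eq:good_1}, monotonicity for the upper bounds, and \cref{remark:good-increasing-simpler} for \eqref{eq:good_2}. The only cosmetic difference is that you derive subadditivity and the \eqref{eq:good_2} lower bound from the star-shaped inequality $c(t)\geq \tfrac{t}{s}c(s)$, where the paper invokes Hermite--Hadamard with $c(0)\geq 0$ directly; the two are interchangeable here.
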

\begin{proof}
Fix a nondecreasing concave function $c:\R_+\map \R_+$ and a parameter $0\leq \xi \leq \frac{1}{2}$.
First note that, since $c$ is nonnegative and concave, it must be subadditive. That is, for all $x,z\geq 0$:
\begin{equation}
\label{eq:subadditive}
c(x)+c(z) \geq c(x+z)
\end{equation}
Furthermore, from the Hermite-Hadamard inequality (see, e.g., \cite{Mitrinovic1985}) and the fact that $c$ is nondecreasing, for any $0\leq a < b$:
\begin{equation}
\label{eq:HH}
\frac{f(a)+f(b)}{2}\leq \frac{1}{b-a}\int_a^b f(t)\, dt\leq f(b).
\end{equation}

Applying first \eqref{eq:HH} for $a=x$ and $b=x+w$ we get that
$$
\frac{c(x)+c(x+w)}{2}\leq \frac{1}{w}\int_x^{x+w} c(t)\, dt \leq c(x+w),
$$
so
\begin{align*}
\frac{1}{w} \int_x^{x+w} c(t)\,dt +\xi\cdot c(w) 
	&\leq c(x+w)+ \xi c(w)\leq (1+\xi)c(x+z)\\
\intertext{and}
\frac{1}{w} \int_x^{x+w} c(t)\,dt +\xi\cdot c(w) 
	& \geq \frac{c(x)+c(x+w)}{2} +\xi c(w)\\
	& \geq \frac{1}{2} c(x+w) + \xi [c(x)+c(w)], &&\text{since}\;\; \xi\leq \frac{1}{2},\\
		& \geq \left(\frac{1}{2}+\xi\right)c(x+w), &&\text{due to \eqref{eq:subadditive}.}
\end{align*}
Thus, condition~\eqref{eq:good_1} is satisfied with $\alpha_{1}=\frac{1}{2}+\xi$, $\alpha_{2}=1+\xi$.

Next, applying~\eqref{eq:HH} for $a=0$ and $b=x$ we get that
\begin{align*}
\frac{1}{2}c(x) \leq \frac{c(0)+c(x)}{2} 
\leq \frac{1}{x} \int_0^x c(t)\,dt 
\leq c(x)=(1+\xi)c(x)-\xi c(x),
\end{align*}
thus condition~\eqref{eq:good_2_increasing} is satisfied with $\beta_{1}=\frac{1}{2}$, $\beta_{2}=1+\xi$.

Summarizing, we have shown (see~\cref{def:good-costs}) that any concave cost
function is $(\xi+\frac{1}{2},\xi+1,\frac{1}{2},\xi+1)$-good, for any
$\xi\in[0,\frac{1}{2}]$. Performing the change of variables $\mu=\xi+\frac{1}{2}$
concludes our proof.
\end{proof}

\begin{theorem}
\label{th:games-concave}
Any weighted congestion game with nondecreasing concave cost functions has a $(\lambda,\frac{\lambda}{\lambda-1})$-equilibrium, for any $\lambda\in[\frac{3}{2},2]$. 
\end{theorem}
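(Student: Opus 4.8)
The plan is to follow the same two-step recipe already used for polynomial costs in \cref{th:games-poly}: first pin down the goodness parameters of the cost functions via the relevant lemma, and then feed them into the master tool \cref{th:good_existence_easy}. The analytic work of extracting the required bounds has already been carried out in \cref{lemma:concave-good}, so what remains is essentially a bookkeeping computation together with a change of variables.

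First I would observe that every cost function $c_e$ is nondecreasing and concave, hence by \cref{lemma:concave-good} it is $(\mu,\mu+\frac12,\frac12,\mu+\frac12)$-good for any fixed $\mu\in[\frac12,1]$. Crucially, these four goodness parameters do not depend on the particular resource $e$: choosing a single common value of $\mu$ makes the game $\{(\mu,\mu+\frac12,\frac12,\mu+\frac12)\}$-good in the sense of \cref{def:good-games} (using the trivial conical combinations of \cref{remark:conical-to-simple}), so the index set $J$ collapses to a single type.

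Because all the parameters coincide across resources, the maxima and minima appearing in \cref{th:good_existence_easy} become trivial, and I would simply read off
\[
\alpha=\frac{\alpha_2}{\alpha_1}=\frac{\mu+\frac12}{\mu}=1+\frac{1}{2\mu},\qquad
\beta=\frac{\beta_2/\alpha_1}{\beta_1/\alpha_1}=\frac{\beta_2}{\beta_1}=\frac{\mu+\frac12}{\frac12}=2\mu+1.
\]
Thus the game admits an $(\alpha,\beta)$-equilibrium for each $\mu\in[\frac12,1]$.

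Finally I would reparameterize by $\lambda=\alpha=1+\frac{1}{2\mu}$, that is, $\mu=\frac{1}{2(\lambda-1)}$. As $\mu$ sweeps $[\frac12,1]$, the quantity $\lambda$ sweeps $[\frac32,2]$ (the endpoint $\mu=\frac12$ gives $\lambda=2$, and $\mu=1$ gives $\lambda=\frac32$), which is exactly the advertised range. Substituting into $\beta=2\mu+1$ yields $\beta=\frac{1}{\lambda-1}+1=\frac{\lambda}{\lambda-1}$, which completes the proof. The only step warranting a moment's care is verifying that this reparameterization is monotone (indeed $\lambda$ is strictly decreasing in $\mu$) and hence a bijection onto $[\frac32,2]$; I do not expect any genuine obstacle, since all the substantive inequalities were already discharged inside \cref{lemma:concave-good}.
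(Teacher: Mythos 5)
Your proposal is correct and follows essentially the same route as the paper's own proof: both invoke \cref{lemma:concave-good} with a single common parameter $\mu$, plug the resulting goodness parameters into \cref{th:good_existence_easy}, and relate $\mu$ and $\lambda$ via $\mu=\frac{1}{2(\lambda-1)}$ (the paper merely runs the change of variables in the opposite direction, starting from $\lambda$ rather than $\mu$). The computations of $\alpha=1+\frac{1}{2\mu}=\lambda$ and $\beta=2\mu+1=\frac{\lambda}{\lambda-1}$ match the paper exactly.
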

\begin{proof}
Fix a parameter $\lambda\in[\frac{3}{2},2]$ and let $\mu=\frac{1}{2(\lambda-1)}$. Then, $\mu\in[\frac{1}{2},1]$ and thus, due to~\cref{lemma:concave-good}, we can deduce that our game is $(\mu,\mu+\frac{1}{2},\frac{1}{2},\mu+\frac{1}{2})$-good (according
to~\cref{def:good-costs}). Deploying~\cref{th:good_existence_easy} we can establish the existence of an
a $(\alpha,\beta)$-equilibrium with 
$$
\alpha=\frac{\mu+\frac{1}{2}}{\mu}=1+\frac{1}{2\mu}=1+(\lambda-1)=\lambda
\qquad\text{and}\qquad
\beta=\frac{(\mu+\frac{1}{2})/\mu}{\frac{1}{2}/\mu}=2\mu+1=\frac{1}{\lambda-1}+1=\frac{\lambda}{\lambda-1}.
$$
\end{proof}

\subsection{Fair Cost Sharing}
\label{sec:fair-costs-simple}

In this section, we focus on the fair cost sharing model in which
$c_e(x)=\frac{a_e}{x}$, where $a_e$ is a positive, resource-dependent value. We
assume that $w_{\min}=1$; this is without loss, since we can just rescale the player
weights. This setting was studied by Chen and Roughgarden~\cite{Chen2008}. Here we
improve on their results (see~\cref{fig:fair-cost}), with a simpler proof.

We must notice that the function $x\mapsto a_e/x$ is not integrable in an interval
starting at $0$, and hence we cannot immediately apply our \cref{def:good-costs}.
However, based on our discussion in~\cref{sec:equivalent-costs} we can modify the
game in order to overcome this. First, we assume for our analysis that $a_e=1$ since
any other choice of $a_e$ can be seen as a trivial conical combination of the
function $1/x$ (see~\cref{def:good-games}). Next, we change the cost function
$c_e(x)$ to be constant and equal to $\lambda$ in the interval $[0,1)$, for some
$\lambda \geq 1$.

\begin{lemma}
\label{lem:fair-costs-good}
Fix a weighted congestion game with $w_{\min}=1$. For any $\lambda\geq 1$, the cost
function

\[c(x)=\begin{cases}1/x, & x\geq 1,\\
\lambda, & 0\leq x<1\end{cases}\]
is $(\alpha_1,\alpha_2,\beta_1,\beta_2)$-good with
\begin{align*}
\alpha_1 &=1, & \alpha_2 &=\max\left(\left(1+\frac{1}{w_{\max}}\right)\ln(1+w_{\max}),\ln(w_{\max})+\lambda\right),\\
\beta_1 &=\lambda, & \beta_2 &=\ln W+\lambda.
\end{align*}
\end{lemma}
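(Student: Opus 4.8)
The plan is to verify the two defining inequalities of \cref{def:good-costs} directly, with the simplest possible choice of the auxiliary parameter $\xi$. A quick inspection of the target values $\beta_1=\lambda$ and $\beta_2=\ln W+\lambda$ strongly suggests $\xi=0$: once we compute that $\int_0^x c(t)\,dt=\lambda+\ln x$ for $x\ge 1$, the correction terms $\xi c(w)$ and $\xi c_{\min/\max}(x)$ only obstruct matters. So I would fix $\xi=0$ at the outset, which collapses both \eqref{eq:good_1} and \eqref{eq:good_2} to clean inequalities about the averaged cost alone.

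First I would record the antiderivative: for $x\ge 1$, using $w_{\min}=1$, we have $\int_0^x c(t)\,dt=\int_0^1\lambda\,dt+\int_1^x t^{-1}\,dt=\lambda+\ln x$. Condition \eqref{eq:good_2} is then immediate. For $x\in[1,W]$ we have $c(x)=1/x$, so after multiplying through by $x$ the claim $\beta_1 c(x)\le \frac1x\int_0^x c\le\beta_2 c(x)$ reduces to $\lambda\le\lambda+\ln x\le\lambda+\ln W$, i.e.\ $0\le\ln x\le\ln W$, which holds throughout $[1,W]$.

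For \eqref{eq:good_1} I would split along the two allowed ranges of $x$. When $x=0$ and $w\in[1,w_{\max}]$, the average is $\frac1w(\lambda+\ln w)$ while $c(w)=1/w$, so the condition becomes $\alpha_1\le\lambda+\ln w\le\alpha_2$; the left side holds because $\lambda\ge 1$, and the right side (maximizing the middle term at $w=w_{\max}$) forces $\alpha_2\ge\ln w_{\max}+\lambda$. When $x\in[1,W]$, the whole interval $[x,x+w]$ lies in the region where $c(t)=1/t$, so the average equals $\frac1w\ln(1+w/x)$ and $c(x+w)=1/(x+w)$. Writing $u=w/x$, the two inequalities become $\frac{u}{1+u}\le\ln(1+u)$ (lower bound, $\alpha_1=1$) and $(1+\frac1u)\ln(1+u)\le\alpha_2$ (upper bound). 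The lower bound is the standard estimate $\ln(1+u)\ge u/(1+u)$.

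The main work, and the step I expect to be the crux, is the upper bound in this last case. I would set $g(u)=(1+\frac1u)\ln(1+u)=\frac{(1+u)\ln(1+u)}{u}$ and show it is nondecreasing on $(0,\infty)$: a short computation gives that $g'(u)$ equals $\frac{u-\ln(1+u)}{u^2}$, which is nonnegative since $\ln(1+u)\le u$. The feasible range of $u=w/x$ with $w\in[1,w_{\max}]$ and $x\in[1,W]$ is $[1/W,\,w_{\max}]$, so $\sup g=g(w_{\max})=(1+\frac1{w_{\max}})\ln(1+w_{\max})$, which dictates the first term in the claimed $\alpha_2$. Taking $\alpha_2$ to be the maximum of the two constraints just derived for it---namely $\ln w_{\max}+\lambda$ from the $x=0$ case and $(1+\frac1{w_{\max}})\ln(1+w_{\max})$ from the $x\ge 1$ case---makes all four required inequalities hold simultaneously with $\xi=0$, completing the verification.
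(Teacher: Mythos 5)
Your proof is correct and follows essentially the same route as the paper's: choose $\xi=0$, verify the $\beta$-condition directly from $\int_0^x c(t)\,dt=\lambda+\ln x$, and reduce the $\alpha$-condition to the monotonicity of the ratio $\left(1+\frac{x}{w}\right)\ln\left(1+\frac{w}{x}\right)$ evaluated at the extreme points of the feasible range, with the $x=0$ branch handled separately. The only cosmetic difference is that you establish monotonicity by differentiating $g(u)=\left(1+\frac{1}{u}\right)\ln(1+u)$ in the single variable $u=w/x$, whereas the paper's appendix lemma (\cref{lem:rwx_fair_cost}) proves the same fact via the change of variables $\frac{1}{z}=1+\frac{x}{w}$ and a convexity argument.
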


\begin{proof}
We will choose $\xi=0$ in the \cref{def:good-costs} of good cost functions. Thus, we need to find nonnegative quantities $\alpha_1$, $\alpha_2$, $\beta_1$, $\beta_2$ such that, for $x\in\{0\}\cup[1,W]$, $w\in[1,w_{\max}]$,

\[\alpha_1\cdot c(x+w)\leq\frac{1}{w}\int_x^{x+w}c(t)dt\leq\alpha_2\cdot c(x+w),\]
and for all $x\in[1,W]$,
\[\beta_1\cdot c(x)\leq\frac{1}{x}\int_0^xc (t)dt\leq \beta_2\cdot c(x).\]

For the bounds in $\alpha_1$, $\alpha_2$, we are interested in upper and lower bounds on the ratio
\[R(w,x)=\frac{\frac{1}{w}\int_x^{x+w}c(t)dt}{c(x+w)}.\]
When $x\geq 1$, this becomes
\[R(w,x)=\frac{\frac{1}{w}(\ln(x+w)-\ln(x))}{\frac{1}{x+w}}=\left(1+\frac{x}{w}\right)\ln\left(1+\frac{w}{x}\right);\]
on the other hand, when $x=0$, this becomes
\[R(w,0)=\frac{\frac{1}{w}(\ln(w)+\lambda)}{\frac{1}{w}}=\ln(w)+\lambda.\]
Thus, we get
\[R(w,x)=\left\{\begin{array}{cc}\left(1+\frac{x}{w}\right)\ln\left(1+\frac{w}{x}\right),&x\geq1;\\ \ln w+\lambda,&x=0.\end{array}\right.\]

In \cref{lem:rwx_fair_cost} in the Appendix, we show that the upper branch of $R(w,x)$ is increasing in $w$ and decreasing in $x$; hence, it is maximized at $w\rightarrow w_{\max}$, $x\rightarrow 1$, for a value of $R(w_{\max},1)=\left(1+\frac{1}{w_{\max}}\right)\ln(1+w_{\max})$; and minimized at $w\rightarrow 1$, $x\rightarrow W$, for a value of $R(1,W)\geq R(1,\infty)=1$. On the other hand, the lower branch is maximized at $w\rightarrow w_{\max}$, for a value of $R(w_{\max},0)=\ln(w_{\max})+\lambda$; and minimized at  $w\rightarrow 1$, for a value of $R(1,0)=\lambda\geq 1$. This gives the desired bounds on $\alpha_1$ and $\alpha_2$.

Next, we look at the bounds in $\beta_1,\beta_2$. Since $x\in[1,W]$ we have that $\int_0^x c(t)dt=\ln x+\lambda$. Moreover, it is immediate to observe that

\[\lambda\cdot c(x)=\lambda\cdot\frac{1}{x}\leq\frac{1}{x}(\ln x+\lambda)\leq\frac{1}{x}(\ln W+\lambda)=(\ln W+\lambda)\cdot c(x).\]
\end{proof}

\begin{theorem}
\label{thm:fair-cost-pne}
Fix a fair cost sharing game with unit minimum weight ($w_{\min}=1$), and let
$w_{\max},W$ be the maximum weight and the maximum total load. Then, for any
$\lambda \geq 1$, our game has an $(\alpha,\beta)$-equilibrium where
\[\alpha=\max\left(\left(1+\frac{1}{w_{\max}}\right)\ln(1+w_{\max}),\ln(w_{\max})+\lambda\right),\qquad\beta=1+\frac{\ln W}{\lambda}.\]
\end{theorem}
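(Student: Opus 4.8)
The plan is to derive this as an immediate corollary of the master tool \cref{th:good_existence_easy}, once the fair-sharing costs have been recast as a good game. The only genuine subtlety is that the raw cost $x\mapsto a_e/x$ is neither defined nor integrable at $0$, so \cref{def:good-costs} cannot be invoked verbatim. I would resolve this exactly as anticipated in \cref{sec:equivalent-costs}: since every load $x_e(\vecc{s})$ arising in the game is either $0$ (resource unused) or at least $w_{\min}=1$, the values of $c_e$ on the open interval $[0,1)$ are strategically irrelevant, so I am free to extend each cost to an integrable function over $\R_+$ by declaring it constant on $[0,1)$. Normalizing $a_e=1$ (absorbing the scalar into a conical combination below), this produces precisely the piecewise function $c$ of \cref{lem:fair-costs-good}, and the modified game is strategically equivalent to the original.

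Next I would express the game in the conical-combination language of \cref{def:good-games}. After the redefinition, each resource cost is $c_e(x)=a_e\cdot c(x)$, a nonnegative scalar multiple of the single good template $c$; hence I may take the index set $J$ to be a singleton, with $J_e=J$ and $\lambda_{e}=a_e$ for every $e$. \cref{lem:fair-costs-good} then supplies the goodness parameters of that single template, namely $\alpha_1=1$, $\alpha_2=\max\!\big((1+\tfrac{1}{w_{\max}})\ln(1+w_{\max}),\,\ln(w_{\max})+\lambda\big)$, $\beta_1=\lambda$ and $\beta_2=\ln W+\lambda$, so the game is $\{(\alpha_1,\alpha_2,\beta_1,\beta_2)\}$-good.

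Finally I would feed these parameters into \cref{th:good_existence_easy}. Because there is only one index in $J$, the maxima and minima in that theorem collapse to the values of this lone template, giving directly
\[
\alpha=\frac{\alpha_2}{\alpha_1}=\alpha_2
\qquad\text{and}\qquad
\beta=\frac{\beta_2/\alpha_1}{\beta_1/\alpha_1}=\frac{\beta_2}{\beta_1}=\frac{\ln W+\lambda}{\lambda}=1+\frac{\ln W}{\lambda},
\]
which are exactly the claimed bounds.

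I expect essentially no difficulty in the theorem itself: all of the analytic effort---bounding the ratio $R(w,x)$ and verifying the $\beta$-conditions---has already been discharged in \cref{lem:fair-costs-good} (with the monotonicity of $R$ deferred to the appendix). The one point that warrants a careful sentence is the reduction step: I must make explicit that loads never fall into $[0,1)$ except at $0$, so that redefining the cost there changes neither the players' costs nor the social cost, and therefore the $(\alpha,\beta)$-equilibrium produced for the modified game is an $(\alpha,\beta)$-equilibrium of the original fair cost sharing game.
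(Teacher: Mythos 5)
Your proposal is correct and follows essentially the same route as the paper: handle the non-integrability at $0$ via the equivalent-cost discussion of \cref{sec:equivalent-costs} (redefining the cost to be the constant $\lambda$ on $[0,1)$ and absorbing $a_e$ as a conical scalar), then feed the parameters of \cref{lem:fair-costs-good} into \cref{th:good_existence_easy} to get $\alpha=\alpha_2/\alpha_1$ and $\beta=\beta_2/\beta_1=1+\frac{\ln W}{\lambda}$. Your explicit remark that loads never fall in $(0,1)$, so the modified game is strategically equivalent, is exactly the justification the paper relies on (stated just before its lemma rather than inside the theorem's proof).
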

\begin{figure}[t]
\centering
\begin{tikzpicture}[baseline=height,scale = 0.6]
\tikzmath{\wm = 10; \a = 10.0; \b= 6.0; \bone=1.386; \btwo=2.443;}

\draw[->] (1, 1) -- (\a, 1);
\node[right] at (\a, 1) {\scriptsize $w_{\max}$};
\draw[->] (1, 1) -- (1, \b);
\node[above] at (1, \b) {\scriptsize $\alpha$};

\node[left] at (1, 1) {\scriptsize $1$};
\node[left] at (1, \bone) {\scriptsize $1.386$};
\node[left] at (1, \btwo) {\scriptsize $2.443$};

\draw[blue,thick,domain=1:\wm,samples=100] plot (\x,{max((1+1/\x)*ln(1+\x),1+ln(\x))});
\draw[red,thick,domain=1:\wm,samples=100] plot (\x,{ln(exp(1)*(1+\x))/ln(2)});

\end{tikzpicture}
\begin{tikzpicture}[baseline=height,scale = 0.6]
\tikzmath{\wm = 3; \wt = 50; \xim=10; \a = 1+\xim + ln(\wm); \b= 1.5 + ln(\wt); \cra=2*ln(exp(1)*(1+\wm))/ln(2); \crb=1+2*ln(1+\wt)/(ln(2)*\cra);}

\draw[->] (1, 1) -- (\a, 1);
\node[right] at (\a, 1) {\scriptsize $\alpha$};
\draw[->] (1, 1) -- (1, \b);
\node[above] at (1, \b) {\scriptsize $\beta$};

\node[left] at (1, 1) {\scriptsize $1$};

\node[below] at ({1+ln(\wm)},1) {\scriptsize $\varTheta(\ln w_{\max})$};
\node[left] at (1,{1+ln(\wt)}) {\scriptsize $\varTheta(\ln W)$};
\draw[dotted] ({1+ln(\wm)},1) -- ({1+ln(\wm)},{1+ln(\wt)}) -- (1,{1+ln(\wt)});

\node[below] at ({\xim + ln(\wm)}, 1) {\scriptsize $\varTheta(\ln W)$};
\node[left] at (1, {1 + ln(\wt)/\xim}) {\scriptsize $\varTheta(1)$};
\draw[dotted] ({\xim + ln(\wm)}, 1) -- ({\xim + ln(\wm)}, {1 + ln(\wt)/\xim}) -- (1, {1 + ln(\wt)/\xim});
\draw[dotted] (\cra, 1) -- (\cra,\crb) -- (1, \crb);

\draw[blue,thick,domain=1:\a-ln(\wm),samples=100] plot ({\x+ln(\wm)},{1+ln(\wt)/\x});
\draw[blue,thick,fill=blue] ({1+ln(\wm)},{1+ln(\wt)}) circle (2pt);

\draw[red,thick,domain=\cra:\a,samples=100] plot ({\x},{1+2*ln(1+\wt)/(ln(2)*\x)});
\draw[red,thick,fill=red] (\cra,\crb) circle (2pt);

\end{tikzpicture}
\caption{Fair cost sharing games. Left: guarantee on the existence of
$\alpha$-approximate equilibria, as a function of $w_{\max}$, given
by~\cref{thm:fair-cost-pne} (setting $\lambda=1$). Right: trade-off curve for the
existence of $(\alpha,\beta)$-equilibria, given by~\cref{thm:fair-cost-pne}; here we
choose $w_{\max}=3$, $W=50$. For comparison, the previously best bounds
\cite[Theorem 5.1 and Lemma 5.3]{Chen2008} are plotted in red, while our results are
in blue. The fact that the blue line of the right plot starts earlier is a direct
consequence of our results providing a strictly better (smaller) absolute existence
guarantee $\alpha$ (see left plot).}
\label{fig:fair-cost}
\end{figure}
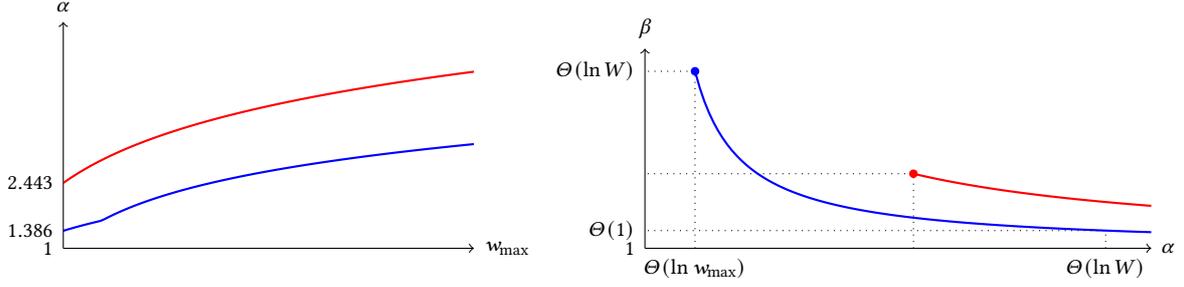

\begin{proof}
Combining \cref{th:good_existence_easy} with \cref{lem:fair-costs-good} we conclude that, for $\lambda\geq1$, our game has an $(\alpha,\beta)$-equilibrium with
\[\alpha=\frac{\alpha_2}{\alpha_1}=\max\left(\left(1+\frac{1}{w_{\max}}\right)\ln(1+w_{\max}),\ln(w_{\max})+\lambda\right),\qquad\beta=\frac{\beta_2/\alpha_1}{\beta_1/\alpha_1}=\frac{\ln W+\lambda}{\lambda}=1+\frac{\ln W}{\lambda}.\]
\end{proof}

The parameter $\lambda$ quantifies the trade-off curve between the approximation guarantee on equilibria and their price of stability. At one extreme case $\lambda=1$, we get that
\[\alpha=\max\left(\left(1+\frac{1}{w_{\max}}\right)\ln(1+w_{\max}),\ln(w_{\max})+1\right)=\varTheta(\ln w_{\max}),\qquad \beta=1+\ln W;\]
in other words, there exist $\varTheta(\ln w_{\max})$-approximate equilibria with price of stability $\varTheta(\ln W)$. At the other extreme case $\lambda=\varTheta(\ln W)$, we get that
\[\alpha=\max\left(\left(1+\frac{1}{w_{\max}}\right)\ln(1+w_{\max}),\ln(w_{\max})+\varTheta(\ln W)\right)=\varTheta(\ln W),\qquad\beta=1+\frac{\ln W}{\varTheta(\ln W)}=\varTheta(1);\]
in other words, there exist $\varTheta(\ln W)$-approximate equilibria with
\emph{constant} price of stability $\varTheta(1)$. The complete trade-off
curve can be seen in~\cref{fig:fair-cost} (right). We can also compare our results with the best known upper bounds. In \cite[Lemma 5.3]{Chen2008}, it was shown that $\alpha$-approximate equilibria exist for $\alpha\geq\log_2[e(1+w_{\max})]$; and in \cite[Theorem 5.1]{Chen2008}, it was shown that $\left(f,1+\frac{2\log_2(1+W)}{f}\right)$-equilibria exist for any $f\geq
2\log_2[e(1+w_{\max})]$. As \cref{fig:fair-cost} shows, we improve on both results.

\subsection{Mixtures of Cost Functions}
\label{sec:costs-mixtures}
A big advantage of our approach is that we can study the existence of
$(\alpha,\beta)$-equilibria for games that merge cost functions of two or more
different types. For example, in this section we look at congestion games that have
\emph{both} concave costs and polynomial costs (as well as any conical combination).
Interestingly, we show that this results in only a small increase in the PoS
guarantee of~\cref{th:games-poly}, while the existence guarantee stays the same. For
the following theorem we consider polynomials of degree at least $2$, since affine
functions are themselves concave and would be already captured
by~\cref{th:games-concave}.

\begin{theorem}
\label{th:games-concave-poly}
Any weighted congestion game with cost functions that are conical combinations of
concave and polynomial costs of maximum degree $d\geq 2$ has an
$(\lambda,1+\frac{d+1}{\lambda})$-equilibrium, for any $\lambda\in[d,d+1]$.
\end{theorem}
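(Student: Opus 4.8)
The plan is to combine the goodness parameters we have already established for monomials (\cref{lemma:polynomials-good}) and for concave functions (\cref{lemma:concave-good}), and then feed the resulting family into the master tool (\cref{th:good_existence_easy}) via the conical-combination machinery of \cref{def:good-games}. The cost functions here are conical combinations of concave costs and polynomials of degree at most $d$; since every polynomial is itself a conical combination of monomials of degrees $0,1,\dots,d$, and every constant is captured by \cref{lemma:constant-good}, the whole game is $\ssets{(\alpha_{1,j},\alpha_{2,j},\beta_{1,j},\beta_{2,j})}_{j\in J}$-good for an index set $J$ consisting of the $d+1$ monomial types together with one ``concave'' type.

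First I would assemble the parameter list. Exactly as in the proof of \cref{th:games-poly}, I would take the degree-$d$ monomial to be $(\frac{1}{\lambda},1,\frac{1}{d+1},\frac{1}{\lambda})$-good (using \cref{lemma:polynomials-good} with $\mu=\frac{1}{\lambda}$, legitimate since $\lambda\in[d,d+1]$ forces $\frac{1}{\lambda}\in[\frac{1}{d+1},\frac{1}{d}]$), and each lower-degree monomial of degree $k<d$ to be $(\frac{1}{k+1},1,\frac{1}{k+1},\frac{1}{k+1})$-good. For the concave component I would invoke \cref{lemma:concave-good}; the key design choice is which $\mu$ to pick there. The natural matching is to set $\mu=\frac{1}{2}$, giving a $(\frac{1}{2},1,\frac{1}{2},1)$-good concave type, so that its ratio $\alpha_{2}/\alpha_{1}=2$ does not exceed the polynomial contribution $\lambda\geq d\geq 2$, ensuring $\alpha$ is still driven entirely by the degree-$d$ monomial.

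Then I would just compute the two maxima/minima demanded by \cref{th:good_existence_easy}. For $\alpha=\max_j \alpha_{2,j}/\alpha_{1,j}$, the monomial ratios are $1,2,\dots,d,\lambda$ and the concave ratio is $2$; since $\lambda\geq d\geq 2$ the maximum is $\lambda$, so the existence guarantee is unchanged from \cref{th:games-poly}. For $\beta=\frac{\max_j \beta_{2,j}/\alpha_{1,j}}{\min_j \beta_{1,j}/\alpha_{1,j}}$, I expect the numerator to stay $1$ (all $\beta_{2,j}/\alpha_{1,j}=1$, including the concave term with $\beta_2/\alpha_1=1/(1/2)=2$---here I should double-check, since this would push the numerator to $2$), while the denominator is governed by the degree-$d$ monomial's $\beta_{1,d}/\alpha_{1,d}=\frac{1/(d+1)}{1/\lambda}=\frac{\lambda}{d+1}$. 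The claimed bound $\beta=1+\frac{d+1}{\lambda}$ suggests the ``$+1$'' term arises precisely from the concave contribution inflating the numerator to $2$ rather than $1$: indeed $\frac{2}{\lambda/(d+1)}=\frac{2(d+1)}{\lambda}$ is too large, so the correct bookkeeping must instead \emph{add} the concave and polynomial PoS contributions, reflecting that the numerator $\max$ over $j$ and denominator $\min$ over $j$ interact additively once two incomparable cost families coexist.

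\textbf{Main obstacle.} The delicate point---and the part I expect to require the most care---is exactly this $\beta$ computation: reconciling how \cref{th:good_existence_easy}'s single global ratio $\frac{\max_j \beta_{2,j}/\alpha_{1,j}}{\min_j \beta_{1,j}/\alpha_{1,j}}$ produces the \emph{additive} ``$1+\frac{d+1}{\lambda}$'' form rather than a multiplicative one. The clean way to see this is that for each cost \emph{class} separately the concave part contributes a PoS factor bounded by its own $\beta_2/\beta_1=2$ ratio and the polynomial part by $\frac{d+1}{\lambda}$, and because the social cost decomposes additively over the (now separated, via \cref{remark:conical-to-simple}) parallel resources, the overall bound is the sum of the worst per-class social-cost overestimate normalized by the best lower bound. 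I would therefore choose the concave parameter $\mu$ so that its $\beta_1/\alpha_1$ matches the monomials' common value $1$, keeping the denominator pinned at $\frac{\lambda}{d+1}$ and letting the concave term's numerator add its unit contribution; verifying that this choice of $\mu$ simultaneously respects $\mu\in[\frac{1}{2},1]$ and keeps $\alpha=\lambda$ is the real content of the argument, after which the final arithmetic yielding $\beta=1+\frac{d+1}{\lambda}$ and the extreme points $(d,2+\frac{1}{d})$ and $(d+1,2)$ is routine.
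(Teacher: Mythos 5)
Your overall skeleton matches the paper's proof: decompose every cost into monomials (degrees $0,\dots,d$) plus a concave part, invoke \cref{lemma:polynomials-good}, \cref{lemma:constant-good} and \cref{lemma:concave-good}, and feed the resulting parameter family into \cref{th:good_existence_easy} via \cref{def:good-games}. However, there is a genuine gap at exactly the point you flag as the main obstacle: you never identify the choice of $\mu$ for the concave class that makes the computation close, and the strategy you end up proposing is the same one you already showed to fail. Requiring the concave class's $\beta_1/\alpha_1=\frac{1/2}{\mu}$ to match the low-degree monomials' common value $1$ forces $\mu=\frac{1}{2}$, which is precisely the choice you (correctly) computed to give $\beta=\frac{2(d+1)}{\lambda}$, strictly worse than the claimed bound since $\frac{d+1}{\lambda}\geq 1$. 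The correct tuning is the opposite matching: choose $\mu$ so that the concave class's $\beta_1/\alpha_1$ equals the \emph{degree-$d$ monomial's} value $\frac{\lambda}{d+1}$, i.e.\ $\mu=\frac{d+1}{2\lambda}$, which is admissible because $\lambda\in[d,d+1]$ and $d\geq 2$ give $\mu\in\left[\frac{1}{2},\frac{1}{2}\left(1+\frac{1}{d}\right)\right]\subseteq\left[\frac{1}{2},1\right]$. With this choice the denominator of the $\beta$-formula stays pinned at $\frac{\lambda}{d+1}$, the numerator becomes $\max\left\{1,\,1+\frac{1}{2\mu}\right\}=1+\frac{\lambda}{d+1}$, and hence $\beta=\left(1+\frac{\lambda}{d+1}\right)\big/\frac{\lambda}{d+1}=1+\frac{d+1}{\lambda}$, while $\alpha=\max\left\{\lambda,\,1+\frac{\lambda}{d+1}\right\}=\lambda$ since $1+\frac{\lambda}{d+1}\leq 2\leq d\leq\lambda$. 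This is exactly what the paper does.

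Your proposed justification for the ``$+1$'' is also wrong as a matter of principle, not just of bookkeeping. \cref{th:good_existence_easy} (and \cref{lemma:potential-method} underneath it) yields a pure ratio $\max_j(\beta_{2,j}/\alpha_{1,j})\,/\,\min_j(\beta_{1,j}/\alpha_{1,j})$; there is no mechanism in the framework by which per-class PoS contributions ``add'' across resource classes. The potential is a single global function, its minimizer is one profile, and the social-cost comparison it provides is multiplicative. The additive shape $1+\frac{d+1}{\lambda}$ is an arithmetic artifact of $\frac{1+x}{x}=1+\frac{1}{x}$ with $x=\frac{\lambda}{d+1}$, and it only materializes after the tuned choice $\mu=\frac{d+1}{2\lambda}$; no ``sum of per-class overestimates'' argument is available or needed.
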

\begin{proof}
Fix a maximum degree $d\geq 2$ for the polynomial costs and a parameter
$\lambda\in[d,d+1]$. By defining $\mu=\frac{d+1}{2\lambda}$ we have that
$\frac{1}{2}\leq \mu \leq \frac{1}{2}\left(1+\frac{1}{d}\right) \leq 1$, and so by
applying~\cref{lemma:concave-good} we can derive that any concave cost is
$(\mu,\mu+\frac{1}{2},\frac{1}{2},\mu+\frac{1}{2})$-good. Next, by~\cref{lemma:polynomials-good,lemma:constant-good} we can derive that all
monomials of degree $k=0,\dots,d-1$ are
$(\frac{1}{k+1},1,\frac{1}{k+1},\frac{1}{k+1})$-good and the monomial of degree $d$
is $(\frac{1}{\lambda},1,\frac{1}{d+1},\frac{1}{\lambda})$-good.

Deploying our black-box tool~\cref{th:good_existence_easy} (and shortcutting some calculations that we have already performed in the proof of~\cref{th:games-poly}) we can guarantee the existence of an $(\alpha,\beta)$-equilibrium with
\begin{equation*}
\alpha 
	= \max\sset{1+\frac{1}{2\mu},\lambda}
	= \max\sset{1+\frac{\lambda}{d+1},\lambda}
	= \lambda,
\end{equation*}
since $2\leq d\leq \lambda \leq d+1$, and
\begin{equation*}
\beta 
	= \frac{\max\sset{\frac{\mu+1/2}{\mu},1}}{\min\sset{\frac{1/2}{\mu},\frac{\lambda}{d+1}}}
	= \frac{1+\frac{1}{2\mu}}{\min\sset{\frac{1}{2\mu},\frac{\lambda}{d+1}}}
	= 1+2\mu = 1+ \frac{d+1}{\lambda},
\end{equation*}
where for the third equality we used that, from the definition of $\mu$, $\frac{1}{2\mu}= \frac{\lambda}{d+1}$.
\end{proof}

\subsubsection*{Acknowledgements}
We thank Martin Gairing for interesting discussions.

\bibliographystyle{alphaurl} 
\bibliography{unifying_potential}

\appendix

\section{Technical Lemmas}

\begin{lemma}\label{lem:rwx_fair_cost}
For $w,x\in[1,\infty)$, the function
\[R(w,x)=\left(1+\frac{x}{w}\right)\ln\left(1+\frac{w}{x}\right)\]
is increasing in $w$ and decreasing in $x$.
\end{lemma}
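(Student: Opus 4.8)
The plan is to reduce this two-variable monotonicity statement to a single-variable one by exploiting the scale invariance of $R$. First I would observe that $R(w,x)$ depends only on the ratio $r=w/x$: writing
\[
g(r)=\left(1+\frac{1}{r}\right)\ln(1+r),
\]
we have $R(w,x)=g(w/x)$ for all $w,x>0$. Since, for fixed $x$, the map $w\mapsto w/x$ is increasing, and, for fixed $w$, the map $x\mapsto w/x$ is decreasing, it suffices to prove the single claim that $g$ is increasing on $(0,\infty)$; the desired monotonicity of $R$ in $w$ and in $x$ then follows immediately by composition.

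Next I would differentiate $g$. A short computation, using $\frac{d}{dr}\left(1+\frac{1}{r}\right)=-\frac{1}{r^2}$ and the simplification $\frac{1+1/r}{1+r}=\frac{1}{r}$, gives
\[
g'(r)=-\frac{\ln(1+r)}{r^2}+\frac{1}{r}=\frac{1}{r^2}\bigl(r-\ln(1+r)\bigr).
\]
The final step is to note that $r-\ln(1+r)>0$ for every $r>0$, which is the elementary inequality $\ln(1+r)<r$. Hence $g'(r)>0$ throughout $(0,\infty)$, so $g$ is strictly increasing, which completes the argument.

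There is no serious obstacle here: the only substantive move is spotting the scale-invariance substitution $r=w/x$, after which the whole statement collapses onto the standard inequality $\ln(1+r)<r$. If one preferred to avoid the substitution, one could instead compute $\partial R/\partial w$ and $\partial R/\partial x$ separately and analyze their signs, but that route is messier and duplicates the work; the ratio substitution is attractive precisely because both monotonicity claims fall out of a \emph{single} derivative computation.
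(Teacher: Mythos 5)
Your proof is correct, and it shares the paper's key idea: both arguments exploit the fact that $R$ depends only on a ratio of $w$ and $x$, collapsing the two monotonicity claims into a single one-variable statement composed with a monotone map. The difference is in which ratio is used and how the one-variable monotonicity is established. You set $r=w/x$, differentiate $g(r)=\left(1+\frac{1}{r}\right)\ln(1+r)$ directly, and reduce everything to the elementary inequality $\ln(1+r)<r$; your computation $g'(r)=\frac{1}{r^2}\bigl(r-\ln(1+r)\bigr)$ is correct. The paper instead substitutes $\frac{1}{z}=1+\frac{x}{w}$, i.e.\ $z=\frac{w}{w+x}\in(0,1)$, writes $R=\ln\left(\frac{1}{1-z}\right)/z$, and avoids quotient differentiation altogether: it notes that $\ln\left(\frac{1}{1-z}\right)$ is convex (its derivative $\frac{1}{1-z}$ is increasing) and vanishes at $z=0$, so its secant slope through the origin is increasing in $z$. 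Both routes are short and rigorous; yours is more computational but rests on a more elementary inequality, while the paper's trades the derivative calculation for the standard fact that a convex function vanishing at the origin has nondecreasing secant slopes.
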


\begin{proof} Let us apply the change of variables $\frac{1}{z}\equiv 1+\frac{x}{w}$, so that we can write
\[R(w,x)=\left(1+\frac{x}{w}\right)\ln\left(1+\frac{w}{x}\right)\equiv\frac{\ln\left(\frac{1}{1-z}\right)}{z}.\]
Since $w,x\in[1,\infty)$, it follows that $1/z\in(1,\infty)$ and thus $z\in(0,1)$. Notice now that $\ln\left(\frac{1}{1-z}\right)$ is convex for $z\in(0,1)$, since its first derivative,
\[\frac{d}{dz}\ln\left(\frac{1}{1-z}\right)=-\frac{d}{dz}\ln\left(1-z\right)=\frac{1}{1-z}\]
is increasing in $z$. Since in addition $\left.\ln\left(\frac{1}{1-z}\right)\right|_{z=0}=0$, we conclude that $\ln\left(\frac{1}{1-z}\right)/z$ is increasing in $z$. Since $z=\frac{w}{w+x}$ is increasing in $w$ and decreasing in $x$, the result follows.
\end{proof}

\end{document}